\documentclass[a4paper]{article}


\usepackage{extarrows} 
\usepackage{amsmath,amsthm,amsfonts,amssymb}
\allowdisplaybreaks[4] 
\newtheorem{theorem}{Theorem}

\newtheorem{lemma}{Lemma}
\newtheorem{proposition}{Proposition}

\newcommand{\op}[1]{\operatorname{#1}}
\newcommand{\bb}[1]{\mathbb{#1}}

\newcommand{\cl}[1]{\mathcal{#1}}
\newcommand{\mbf}[1]{\mathbf{#1}}

\newcommand{\GF}{\mathbb{F}}
\newcommand{\Lra}{\Leftrightarrow}
\newcommand{\ra}{\rightarrow}
\newcommand{\Ra}{\Rightarrow}
\newcommand{\varep}{\varepsilon}
\newcommand{\ord}{\operatorname{ord}}
\newcommand{\wh}[1]{\widehat{#1}}
\newcommand{\ol}[1]{\overline{#1}}

\usepackage{setspace}
\linespread{1.1}

\usepackage{indentfirst}
\setlength{\parindent}{2em}

\usepackage{geometry}
\geometry{left=2cm,right=2cm,top=2cm,bottom=3cm}

\usepackage{paralist}

\usepackage{hyperref}
\hypersetup{colorlinks, linkcolor=blue, bookmarksnumbered=true, bookmarksopen=true}

\title{Multiplicities of Character Values of Binary Sidel'nikov-Lempel-Cohn-Eastman Sequences}
\author{Qi Zhang, Jing Yang}

\begin{document}
\maketitle

\begin{abstract}
Binary Sidel'nikov-Lempel-Cohn-Eastman sequences (or SLCE sequences) over $\GF_2$ have even period and almost
perfect autocorrelation. However, the evaluation of the linear complexity of these sequences is really difficult.
In this paper, we continue the study of \cite{Alaca16}. We first express the multiple roots of character
polynomials of SLCE sequences into certain kinds of Jacobi sums. Then by making use of Gauss sums and Jacobi sums
in the ``semiprimitive'' case, we derive new divisibility results for SLCE sequences.
\end{abstract}

\textbf{ Keywords: }Linear complexity, Sidel'nikov sequence, Stream cipher, Gauss sums, Jacobi sums

\begin{section}{Introduction}
Let $\GF$ be a finite field and $\mbf{s}=(s_0,s_1,s_2,\cdots)$ be a sequence over $\GF$. $\mbf{s}$ is
called periodic if there exists a positive integer $T$ such that $s_i=s_{i+T}$ for any $i\geq0$. The smallest
such $T$ is called the period of $\mbf{s}$.

Periodic sequences applied in stream ciphers should has some good properties, such as good balance property,
low autocorrelation, large period and large linear complexity \cite{Golomb04}. In this paper, we focus on
linear complexity.

For a sequence $\mbf{s}$, if there exists a positive integer $L$ and $c_1,c_2,\cdots,c_L\in\GF$ such that
\[s_i=-(c_1s_{i-1}+c_2s_{i-2}+\cdots+c_Ls_{i-L})\ \ \ \mbox{for each }i\geq L\]
then the smallest such $L$ is called the linear complexity of $\mbf{s}$, and corresponding polynomial
$c(X):=1+c_1X+c_2X^2+\cdots+c_LX^L$ is called the minimal polynomial of $\mbf{s}$.

Throughout the paper, let $p$ denote a prime number, $q=p^m$, $\GF_q$ the finite field of order $q$, $\alpha$
a primitive element of $\GF_q$, $d\mid(q-1)$ a prime number. Since $\GF_q^*$ is a cyclic group of order $q-1$,
$\langle\alpha^d\rangle$ is the unique subgroup of index $d$. We define $C_i:=\alpha^i\langle\alpha^d\rangle$
as the $i$th cyclotomic coset of index $d$. Then the $d$-ary SLCE sequence $\mbf{s}=(s_0,s_1,s_2,\cdots)$ over
$\GF_d$ can be defined by
\[s_n=\begin{cases}
i & \mbox{if }\alpha^n+1\in C_i\mbox{ for some }i \\
0 & \mbox{otherwise}
\end{cases}\]

SLCE sequences were introduced by Sidel'nikov \cite{Sidel69} and Lempel, Cohn and Eastman \cite{LCE77}
(the binary case, i.e., $d=2$)\footnote{$d=2$ implies $p$ is a odd prime number.} independently.
The sequence $\mbf{s}$ has even period $T=q-1$, perfect balance property and almost optimal autocorrelation
\cite{Sidel69,LCE77}. A natural problem is to study the linear complexity of SLCE sequences.

Let $S(X)=s_0+s_1X+s_2X+\cdots+s_{T-1}X^{T-1}$. It is a basic result \cite{Golomb04} that the minimal polynomial
of $\mbf{s}$ is given by
\[c(X)=\frac{X^T-1}{\gcd(X^T-1,S(X))}\]
Therefore, the linear complexity of $\mbf{s}$ is given by
\[L=\deg c(X)=T-\deg(\gcd(X^T-1,S(X)))\]
So it is enough to decide $\gcd(X^T-1,S(X))$. We emphasize that these are polynomials over $\GF_d$ whose
characteristic is a divisor of $T$, which means we should not only determine the common roots but also compute
the multiplicities of these roots.

Let $\beta$ be a root of $X^T-1$. Then $\beta$ is a $k$th primitive root of unity over $\GF_d$, where $k\mid T$
and $d\nmid k$. Let $I_\beta(X)$ denote the minimal polynomial of $\beta$ over $\GF_d$. We need to determine
whether $I_\beta(X)\mid S(X)$, i.e., $S(\beta)=0$. Furthermore, we have to compute the multiplicities. This
results are called ``divisibility results'' \cite{Alaca16}.

\cite{Helle02}, \cite{Pott03}, \cite{MeidlF206}, and \cite{Alaca16} studied the linear complexity of binary
SLCE sequences. \cite{MeidlFd06} studied the linear complexity of $d$-ary SLCE sequences.
\cite{Meidl08} generalized the definition of SLCE sequences to the case $d$ is a prime power and studied
the linear complexity of generalized SLCE sequences over nonprime finite fields. Since it seems difficult to
compute the linear complexity of $d$-ary SLCE sequences over $\GF_d$, some researchers studied the linear
complexity and $k$-error linear complexity of SLCE sequences over $\GF_p$. We refer the reader to
\cite{HelleFp03}, \cite{Helle04}, \cite{Garaev06}, \cite{Eun05}, \cite{Kim06}, \cite{Kim05}, \cite{Winter06},
\cite{Chung06} and \cite{Meidl07}.

In this paper, we continue the study of the linear complexity of binary SLCE sequences in \cite{Alaca16}.
\cite{Alaca16} gave a necessary and sufficient condition of $S(\beta)=0$, which is a congruence of
Jacobi sums. Since Jacobi sums can be evaluated in some specific cases, authors of \cite{Alaca16} were able to
give new divisibility results. In this paper, we focus on the multiplicities of $\beta$. By making use of
Hasse derivative and Lucas's congruence, we can generalize the condition in \cite{Alaca16}.
Finally, we prove that a root must be multiple root in semiprimitive case, which is a new divisibility result.

This paper is organized as follows. Section \ref{sec:pre} briefly reviews the mathematical tools which are useful
in the sequel. Section \ref{sec:main} gives the main results. Section \ref{sec:semi} applies the main results to
semiprimitive case. Section \ref{sec:con} concludes this paper.
\end{section}

\begin{section}{Preliminaries}\label{sec:pre}
\begin{subsection}{Multiplicative Characters}
A multiplicative character $\chi$ of a finite field $\GF_q$ is defined to be a group homomorphism from
$\GF_q^*$ to $\bb{C}^*$. Since $\GF_q^*=\langle\alpha\rangle$, $\chi$ is completely determined by $\chi(\alpha)$.
It is easy to see $\chi(\alpha)^{q-1}=\chi(\alpha^{q-1})=\chi(1)=1$, so $\chi(\alpha)$ must be a $(q-1)$th
root of unity, i.e., the image of $\chi$ is $\langle\zeta_{q-1}\rangle$, where $\zeta_n$ denotes
$\exp(2\pi\sqrt{-1}/n)$.

The set of all multiplicative characters of $\GF_q$ is denoted by $\wh{\GF_q^*}$. It is well-known that
\[\wh{\GF_q^*}=\{\eta_{\frac{i}{q-1}}:0\leq i\leq q-1\}\]
where $\eta_j$ is determined by $\eta_j(\alpha)=\exp(2\pi\sqrt{-1}j)$ for rational number $j$.
Moreover, $\wh{\GF_q^*}$ forms a group isomorphic to $\GF_q^*$ under the point-wise multiplication.
The identity element $\varep$ is defined by $\varep(x)=1$ for each $x\in\GF_q^*$.
The inverse element of $\chi$ is $\ol\chi$, where $\ol\chi(x)=\ol{\chi(x)}$ for each $x\in\GF_q^*$.

For convenience, we extend the domain of a multiplicative character $\chi$ to $\GF_q$ by setting $\chi(0):=0$.

We need following results of multiplicative characters.
\begin{lemma}\label{lem:character1}
For $d\mid(q-1)$ and $x\in\GF_q^*$, we have
\[\frac{1}{d}\sum_{i=0}^{d-1}\eta_{\frac{i}{d}}(x)=\begin{cases}
1 & \mbox{if }x\in\langle\alpha^d\rangle \\
0 & \mbox{if }x\not\in\langle\alpha^d\rangle
\end{cases}\]
\end{lemma}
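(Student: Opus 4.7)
The plan is to collapse the character sum to a geometric progression in a single root of unity. First I would pick an integer $n$ with $x=\alpha^n$ (possible because $\alpha$ generates $\GF_q^*$) and compute
\[\eta_{\frac{i}{d}}(x)=\eta_{\frac{i}{d}}(\alpha)^n=\bigl(\exp(2\pi\sqrt{-1}\cdot i/d)\bigr)^n=\zeta_d^{in},\]
where $\zeta_d=\exp(2\pi\sqrt{-1}/d)$. This rewrites the left-hand side as $\frac{1}{d}\sum_{i=0}^{d-1}\zeta_d^{in}$, a standard geometric sum.

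Then I would split on whether $d\mid n$. If $d\mid n$, every term equals $1$, so the sum is $d$, giving the value $1$ after division by $d$. If $d\nmid n$, then $\zeta_d^n\neq 1$, and the geometric series formula gives
\[\sum_{i=0}^{d-1}\zeta_d^{in}=\frac{\zeta_d^{dn}-1}{\zeta_d^n-1}=0,\]
since $\zeta_d^{dn}=1$. To match the statement, I would observe that the index-$d$ subgroup $\langle\alpha^d\rangle$ of the cyclic group $\GF_q^*=\langle\alpha\rangle$ is exactly $\{\alpha^n:d\mid n\}$, so the two cases correspond precisely to the two cases in the conclusion.

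There is no serious obstacle here; the only point to keep straight is that $\eta_{i/d}(\alpha)$ really is a $d$-th root of unity, which uses the hypothesis $d\mid(q-1)$ via the parametrization of $\wh{\GF_q^*}$ recalled just above the lemma, so that the geometric-series step is legitimate. After that, the argument is simply the orthogonality of characters of a cyclic group applied to the quotient $\GF_q^*/\langle\alpha^d\rangle$.
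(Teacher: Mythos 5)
Your argument is correct: writing $x=\alpha^n$, reducing to the geometric sum $\sum_{i=0}^{d-1}\zeta_d^{in}$, and identifying $\langle\alpha^d\rangle$ with $\{\alpha^n: d\mid n\}$ is exactly the standard orthogonality argument, and the hypothesis $d\mid(q-1)$ is used in the right place. The paper states this lemma without proof, so there is nothing to contrast with; your write-up supplies the canonical justification.
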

\begin{lemma}\label{lem:character2}
Let $d_1,d_2\mid(q-1)$, $\chi$ a multiplicative character of order $d_1$. If $\gcd(d_1,d_2)=1$ and $d_1>1$, then
\[\sum_{x\in\langle\alpha^{d_2}\rangle}\chi(x)=0\]
Furthermore, for each $i=0,1,\cdots,d_2-1$
\[\sum_{x\in\alpha^i\langle\alpha^{d_2}\rangle}\chi(x)=0\]
\end{lemma}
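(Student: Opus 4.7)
The plan is to apply the standard orthogonality trick for character sums: if $\chi$ restricted to a finite subgroup $H$ of $\GF_q^*$ is nontrivial, then $\sum_{x \in H} \chi(x) = 0$. The verification that $\chi|_H$ is nontrivial (here $H = \langle \alpha^{d_2} \rangle$) is where the coprimality and $d_1 > 1$ hypotheses enter.

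More concretely, I would proceed as follows. First, since $\chi$ has order $d_1$, we can write $\chi(\alpha) = \zeta_{d_1}^j$ for some integer $j$ with $\gcd(j, d_1) = 1$. Then
\[
\chi(\alpha^{d_2}) = \zeta_{d_1}^{j d_2}.
\]
Because $\gcd(d_1, d_2) = 1$ and $\gcd(j, d_1) = 1$, we have $\gcd(j d_2, d_1) = 1$. Combined with $d_1 > 1$, this forces $\chi(\alpha^{d_2}) \neq 1$. So choosing $y := \alpha^{d_2} \in \langle \alpha^{d_2} \rangle$ gives an element of the subgroup on which $\chi$ takes a value different from $1$.

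Next, set $S := \sum_{x \in \langle \alpha^{d_2} \rangle} \chi(x)$. Since $y \in \langle \alpha^{d_2} \rangle$, multiplication by $y$ permutes the subgroup, and hence
\[
\chi(y)\, S = \sum_{x \in \langle \alpha^{d_2} \rangle} \chi(yx) = \sum_{x' \in \langle \alpha^{d_2} \rangle} \chi(x') = S.
\]
Thus $(\chi(y) - 1) S = 0$, and since $\chi(y) \neq 1$, we conclude $S = 0$, which is the first assertion.

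For the second assertion, simply factor out $\chi(\alpha^i)$:
\[
\sum_{x \in \alpha^i \langle \alpha^{d_2} \rangle} \chi(x) = \chi(\alpha^i) \sum_{x \in \langle \alpha^{d_2} \rangle} \chi(x) = \chi(\alpha^i) \cdot 0 = 0.
\]
There is really no substantial obstacle here; the only delicate point is the verification that $\chi|_{\langle \alpha^{d_2} \rangle}$ is nontrivial, and the coprimality hypothesis was designed exactly to ensure this. Everything else is the standard character orthogonality argument.
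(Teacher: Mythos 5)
Your proof is correct: the reduction to showing $\chi(\alpha^{d_2})\neq 1$ via $\gcd(jd_2,d_1)=1$ is exactly the right use of the hypotheses, and the translation-invariance argument plus factoring out $\chi(\alpha^i)$ for the coset case is the standard orthogonality argument. The paper states this lemma in its preliminaries without proof, so there is nothing to compare against; your argument is the expected one.
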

\end{subsection}
\begin{subsection}{Gauss sums}
Let $\chi$ be a multiplication character of $\GF_q$. The Gauss sum is defined as follows
\[G(\chi):=\sum_{x\in\GF_q}\chi(x)\zeta_p^{\op{Tr}(x)}\]
where $\op{Tr}$ is the trace mapping from $\GF_q$ to $\GF_p$.

Gauss sums are extremely difficult to evaluate in general. But there are several special cases in which we
can get the explicit formula for Gauss sums. We need Gauss sums in quadratic case and semiprimitive case
in the sequel.

\begin{lemma}[\cite{GJSum}, Theorem 11.5.4]\label{lem:quadr}
Let $\rho=\eta_{\frac{1}{2}}$ be the quadratic multiplicative character of $\GF_q$. Then
\[G(\rho)=\begin{cases}
(-1)^{m-1}\sqrt{q} & \mbox{if }p\equiv1\pmod4 \\
(-1)^{m-1}i^m\sqrt{q} & \mbox{if }p\equiv3\pmod4
\end{cases}\]
\end{lemma}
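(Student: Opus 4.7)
My plan is to prove this classical formula in two stages: first handle the prime field case $m=1$ (the genuine content, due to Gauss), then lift to $\mathbb{F}_{p^m}$ via the Hasse--Davenport relation.

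For the base case $m=1$, let $\rho'$ be the quadratic character of $\GF_p$. A short argument shows that $|G(\rho')|^2 = G(\rho')\ol{G(\rho')} = p$, so $G(\rho') = \pm\sqrt{p}$ when $p \equiv 1 \pmod 4$ and $G(\rho') = \pm i\sqrt{p}$ when $p \equiv 3 \pmod 4$ (using that $\rho'(-1) = (-1)^{(p-1)/2}$). The genuinely hard part is the sign determination, Gauss's famous theorem that in both cases the sign is $+$. I would follow one of the standard routes: either the evaluation of the discriminant of the cyclotomic field $\bb{Q}(\zeta_p)$, or the matrix/eigenvalue approach in which one computes $\det F$ for the Fourier matrix $F = (\zeta_p^{jk})_{j,k}$ both directly and via its eigenvalue decomposition; the ratio between the two determinations pins down the sign of $G(\rho')$. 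I expect this sign determination to be the main obstacle, and would simply cite Gauss's theorem if a self-contained treatment is too long.

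Once the base case is in hand, the extension to $\GF_q = \GF_{p^m}$ is done by the Hasse--Davenport lifting relation. The quadratic character $\rho$ of $\GF_q$ is the lift $\rho' \circ N_{\GF_q/\GF_p}$, because the norm is surjective onto $\GF_p^*$ and its kernel is precisely $\langle\alpha^{(q-1)/(p-1)}\rangle$, which lies inside $\langle\alpha^2\rangle$ (here one uses that $(q-1)/(p-1) = 1 + p + \cdots + p^{m-1}$ and $p$ is odd, so this quotient has the parity needed for the kernel to sit inside the squares). The Hasse--Davenport relation then gives
\[
G(\rho) = (-1)^{m-1}\, G(\rho')^m.
\]

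Finally I would substitute the base-case values: in the case $p \equiv 1 \pmod 4$, $G(\rho')^m = p^{m/2} = \sqrt{q}$, yielding $G(\rho) = (-1)^{m-1}\sqrt{q}$; in the case $p \equiv 3 \pmod 4$, $G(\rho')^m = (i\sqrt{p})^m = i^m\sqrt{q}$, yielding $G(\rho) = (-1)^{m-1} i^m \sqrt{q}$. This reproduces both branches of the stated formula. The routine computations are the two substitutions and the verification that $\rho = \rho' \circ N$; the real mathematical content, and the main obstacle as noted, is Gauss's sign theorem at $m=1$, together with a careful proof of the Hasse--Davenport relation (which one can derive by comparing the $L$-functions of the trivial and quadratic characters on $\GF_p$, or by direct manipulation of the Stickelberger-style product decomposition).
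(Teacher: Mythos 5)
The paper offers no proof of this lemma; it is quoted verbatim from the reference (Berndt--Evans--Williams, Theorem 11.5.4), so there is no internal argument to compare against. Your two-stage plan --- evaluate $G(\rho')$ over the prime field via $G(\rho')^2=\rho'(-1)p$ together with Gauss's sign determination, then lift to $\GF_{p^m}$ by the Hasse--Davenport relation $G(\rho'\circ N)=(-1)^{m-1}G(\rho')^m$ --- is the standard route and is exactly how the cited source establishes the result; the final substitutions giving $(-1)^{m-1}\sqrt{q}$ and $(-1)^{m-1}i^m\sqrt{q}$ check out. Two remarks. First, you have swapped kernel and image: the kernel of $N_{\GF_q/\GF_p}(x)=x^{(q-1)/(p-1)}$ is $\langle\alpha^{p-1}\rangle$, while $\langle\alpha^{(q-1)/(p-1)}\rangle$ is its image; the cleanest justification of $\rho=\rho'\circ N$ is simply that $\rho'\circ N$ is a nontrivial character (by surjectivity of the norm) whose square is trivial, and a cyclic group $\GF_q^*$ of even order has a unique character of order $2$. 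Second, be aware that both load-bearing ingredients (Gauss's sign theorem and Hasse--Davenport) are themselves cited rather than proved, so your argument reduces one classical theorem to two others --- which is acceptable here, since the paper itself treats the whole statement as a black box.
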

\begin{lemma}[\cite{GJSum}, Theorem 11.6.3]\label{lem:semi}
Let $\chi$ be a multiplicative character of $\GF_q$, $N=\op{ord}(\chi)>2$. Suppose there exists a positive integer
$v$ such that $p^v\equiv-1\pmod{N}$, with $v$ chosen minimal. Then $m=2vw$ for some positivi integer $w$ and
\[G(\chi)=(-1)^{w-1+pw\frac{p^v+1}{N}}\]
\end{lemma}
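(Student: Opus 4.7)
The plan is to establish the formula via the Davenport--Hasse lifting relation, reducing the computation on $\GF_q$ to one on the smaller field $\GF_{p^{2v}}$, where the semiprimitive structure makes the Gauss sum essentially real and the sign can be pinned down by hand.

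First, I would verify that $m = 2vw$ for some positive integer $w$. Since $\chi\in\wh{\GF_q^*}$ has order $N$, we need $N\mid(q-1)=p^m-1$, that is, $p^m\equiv1\pmod{N}$. The hypothesis $p^v\equiv-1\pmod{N}$ with $v$ minimal forces the multiplicative order of $p$ modulo $N$ to be exactly $2v$, whence $2v\mid m$.

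Second, I would invoke the Davenport--Hasse theorem. Let $\chi'$ denote the multiplicative character of $\GF_{p^{2v}}^*$ of order $N$ such that $\chi=\chi'\circ\op{Nr}_{\GF_q/\GF_{p^{2v}}}$ (this lift exists because $N\mid(p^{2v}-1)$). Setting $s=w$ in the Davenport--Hasse relation yields
\[
G(\chi)=(-1)^{w-1}G(\chi')^{w},
\]
so the problem reduces to computing $G(\chi')$ on $\GF_{p^{2v}}$.

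Third, I would exploit the semiprimitive condition on $\GF_{p^{2v}}$. Since $p^v\equiv-1\pmod{N}$ and $\chi'$ has order $N$, we get $\chi'(x)^{p^v}=\chi'(x)^{-1}=\ol{\chi'(x)}$ for every $x\in\GF_{p^{2v}}^*$. Applying the Frobenius $x\mapsto x^{p^v}$ to the defining sum of $G(\chi')$, using that the trace $\op{Tr}_{\GF_{p^{2v}}/\GF_p}$ is Frobenius-invariant, and combining with the standard identity $G(\chi')\,\ol{G(\chi')}=\chi'(-1)\,p^{2v}$, one concludes that $G(\chi')$ is (up to the sign determined by $\chi'(-1)$) a real number of absolute value $p^{v}$. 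The value $\chi'(-1)$ is controlled by $N\mid(p^v+1)$ and contributes a factor of the form $(-1)^{(p^v+1)/N}$ (adjusted by parity of $p$), while the residual sign is extracted by a Stickelberger-type congruence for $G(\chi')$ modulo a prime above $p$ in $\bb{Z}[\zeta_N]$.

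The main obstacle is isolating this residual sign of $G(\chi')$ on $\GF_{p^{2v}}$; the magnitude $|G(\chi')|=p^v$ and the reality up to $\chi'(-1)$ are easy, but distinguishing $+p^v$ from $-p^v$ requires either an explicit calculation against a fixed generator of $\GF_{p^{2v}}^*$ or the Stickelberger congruence. Once this is done, raising to the $w$th power, inserting the Davenport--Hasse prefactor $(-1)^{w-1}$, and collecting exponents yields the claimed overall sign $w-1+pw(p^v+1)/N$.
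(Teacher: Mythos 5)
The paper offers no proof of this lemma: it is quoted directly from \cite{GJSum}, Theorem~11.6.3, so there is no internal argument to compare yours against, and your attempt must be judged on its own. Your reduction is the standard one and is correct as far as it goes. The minimality of $v$ does force $\op{ord}_N(p)=2v$ (one needs the small extra argument that $d\mid 2v$, $d\nmid v$ implies $d=2\gcd(d,v)$ and then minimality gives $\gcd(d,v)=v$), hence $2v\mid m$; Davenport--Hasse with $s=w$ gives $G(\chi)=(-1)^{w-1}G(\chi')^{w}$, and the lift $\chi'$ of order $N$ exists because $\ker(\op{Nr})\subseteq(\GF_q^*)^N$. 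On $\GF_{p^{2v}}$ the Frobenius substitution gives $G(\chi')=G(\ol{\chi'})=\chi'(-1)\ol{G(\chi')}$, and for odd $p$ (the relevant case here) one has $\chi'(-1)=1$ automatically, since $(p^{2v}-1)/2=\frac{p^v-1}{2}(p^v+1)$ is a multiple of $N$; combined with $|G(\chi')|^2=p^{2v}$ this yields $G(\chi')=\varepsilon p^{v}$ with $\varepsilon=\pm1$.

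The genuine gap is that the proposal stops exactly where the content of the lemma lies: determining $\varepsilon$. With $w=1$ the asserted formula reads $\varepsilon=(-1)^{(p^v+1)/N}$ (the paper's exponent $pw(p^v+1)/N$ agrees with $w(p^v+1)/N$ modulo $2$ for odd $p$), and everything before this point only shows $G(\chi)=(-1)^{w-1}\varepsilon^{w}p^{vw}$ with $\varepsilon$ undetermined. You explicitly defer this to ``an explicit calculation \dots or the Stickelberger congruence'' without carrying either out; note also that your suggestion that $\chi'(-1)$ ``contributes a factor of the form $(-1)^{(p^v+1)/N}$'' cannot be right, since $\chi'(-1)=1$ here, so that factor must come from the residual sign itself (e.g.\ via Stickelberger's congruence and a digit-sum computation for the exponent of the prime above $p$, or via the counting argument of Baumert--McEliece). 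Until that step is done the lemma is not proved. One incidental benefit of your argument: it produces $|G(\chi)|=p^{vw}=\sqrt q$, which makes visible that the paper's displayed formula is missing the factor $\sqrt q$ on the right-hand side.
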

\end{subsection}
\begin{subsection}{Jacobi sums}
Let $\chi_1,\chi_2$ be multiplicative characters of $\GF_q$. The Jacobi sum is defined by
\[J(\chi_1,\chi_2):=\sum_{x\in\GF_q}\chi_1(x)\chi_2(1-x)\]
We also denote $J(\rho,\chi)$ by $K(\chi)$ where $\rho$ is a quadratic character.

Jacobi sums are closely related to Gauss sums according to following lemma.
\begin{lemma}[\cite{GJSum}, Theorem 2.1.3]\label{lem:Jacobi}
If $\chi_1\chi_2\neq\varep$, then
\[J(\chi_1,\chi_2)=\frac{G(\chi_1)G(\chi_2)}{G(\chi_1\chi_2)}\]
\end{lemma}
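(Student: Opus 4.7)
The plan is to compute the product $G(\chi_1)G(\chi_2)$ directly from the definition of Gauss sums and reorganize the resulting double sum to extract a factor of $J(\chi_1,\chi_2)$. Starting from
\[G(\chi_1)G(\chi_2)=\sum_{x,y\in\GF_q}\chi_1(x)\chi_2(y)\zeta_p^{\op{Tr}(x+y)},\]
I would split the sum according to the value of $z:=x+y$.

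The $z=0$ piece contributes $\chi_2(-1)\sum_{x\in\GF_q^*}(\chi_1\chi_2)(x)$, which vanishes by orthogonality of characters since the hypothesis $\chi_1\chi_2\neq\varep$ makes the inner character nontrivial (so its sum over $\GF_q^*$ is zero). For the $z\neq0$ piece I would change variables via $x=zu$, so that $y=z(1-u)$ and the summand factors as $(\chi_1\chi_2)(z)\cdot\chi_1(u)\chi_2(1-u)$. The inner sum over $u\in\GF_q$ is exactly $J(\chi_1,\chi_2)$ (the convention $\chi_i(0)=0$ automatically kills the degenerate terms $u=0$ and $u=1$) and is independent of $z$, while the outer sum over $z\in\GF_q^*$ reproduces $G(\chi_1\chi_2)$. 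Combining these pieces yields $G(\chi_1)G(\chi_2)=J(\chi_1,\chi_2)\cdot G(\chi_1\chi_2)$, and dividing through gives the stated formula.

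The only real subtlety is dividing by $G(\chi_1\chi_2)$, which requires that this Gauss sum be nonzero. This is standard: for any nontrivial character $\chi$ one has $|G(\chi)|^2=q$, and $\chi_1\chi_2\neq\varep$ guarantees that $\chi_1\chi_2$ is nontrivial so its Gauss sum does not vanish. I do not anticipate a genuine obstacle here; the main bookkeeping is the careful case split at $z=0$ and tracking the extended-by-zero convention $\chi(0)=0$ when passing freely between sums indexed by $\GF_q$ and by $\GF_q^*$.
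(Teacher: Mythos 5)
Your proof is correct and is the standard argument (the same one given in the cited source, Berndt--Evans--Williams); the paper itself states this lemma as a quoted result without proof, so there is nothing internal to compare against. The decomposition at $z=x+y$, the vanishing of the $z=0$ piece via $\chi_1\chi_2\neq\varep$, the substitution $x=zu$, and the justification $|G(\chi_1\chi_2)|^2=q\neq0$ for dividing are all handled properly, including the bookkeeping around the convention $\chi(0)=0$.
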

\end{subsection}
\begin{subsection}{Cyclotomic Fields}
Let $k$ be an odd positive integer, $K=\bb{Q}(\zeta_k)$ the $k$th cyclotomic field, $\cl{O}_K=\bb{Z}[\zeta_k]$
the ring of algebraic integers in $K$. We need some results about the prime ideal factorization of $(2)$ in
$\cl{O}_K$.
\begin{lemma}[\cite{NumTheory}]\label{lem:ideal}
Let $f=\ord_k(2)$, i.e., $f$ is the smallest positive integer such that $k\mid 2^f-1$. Let $\cl{P}$ be any prime
ideal of $\cl{O}_K$ over $2$. We have
\begin{compactitem}
  \item $\cl{O}_K/\cl{P}\cong\GF_{2^f}$.
  \item $1,\zeta_k,\cdots,\zeta_k^{k-1}$ are mutually distinct modulo $\cl{P}$.
  \item $\forall\,\gamma\in\cl{O}_K-\cl{P}$, there exists a unique $k$th root of unity $\zeta$ such that
    $\gamma^{\frac{2^f-1}{k}}\equiv\zeta\pmod{\cl{P}}$
\end{compactitem}
\end{lemma}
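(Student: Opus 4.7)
The three bullets are essentially standard consequences of the splitting theory of primes in cyclotomic fields, together with a separability argument; I would treat them in order.

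First, I would dispose of bullet (i) by appealing to the splitting law for rational primes in $\mathbb{Q}(\zeta_k)$. Since $k$ is odd, $\gcd(2,k)=1$, so $2$ is unramified in $\mathcal{O}_K$. Under the standard identification $\op{Gal}(K/\mathbb{Q})\cong(\mathbb{Z}/k\mathbb{Z})^*$, the Frobenius automorphism $\op{Frob}_{\cl P}$ corresponds to the residue class of $2$ and thus has order exactly $\op{ord}_k(2)=f$. Consequently the residue degree $f(\cl P\mid 2)$ equals $f$, and $\cl O_K/\cl P$ is the field of $2^f$ elements. (If a self-contained argument is wanted in place of citing the cyclotomic splitting law, one can use that $\cl O_K=\bb Z[\zeta_k]$ and factor the $k$th cyclotomic polynomial $\Phi_k(X)$ over $\GF_2$; its irreducible factors all have degree $f$, by the very definition of $\op{ord}_k(2)$, and each factor corresponds to a prime $\cl P$ above $2$ with $\cl O_K/\cl P\cong\GF_{2^f}$.)

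Next, I would prove bullet (ii) by a separability argument. Consider $X^k-1\in\cl O_K[X]$ and reduce modulo $\cl P$. Its formal derivative is $kX^{k-1}$; since $k$ is odd, $k$ is a unit in $\cl O_K/\cl P$, so this derivative is coprime to $X^k-1$ in $(\cl O_K/\cl P)[X]$. Hence the reduction of $X^k-1$ is a separable polynomial over $\GF_{2^f}$, and therefore has exactly $k$ distinct roots in any field where it splits. Because $1,\zeta_k,\ldots,\zeta_k^{k-1}$ are already $k$ distinct roots in $\cl O_K$, their images modulo $\cl P$ must be the $k$ distinct roots of the reduced polynomial in $\cl O_K/\cl P$, which proves mutual distinctness.

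With (i) and (ii) in hand, bullet (iii) is almost immediate. Given $\gamma\in\cl O_K-\cl P$, the reduction $\ol\gamma:=\gamma\bmod\cl P$ lies in $\GF_{2^f}^*$, a cyclic group of order $2^f-1$. Since $k\mid 2^f-1$, the element $\ol\gamma^{(2^f-1)/k}$ has order dividing $k$, hence lies in the unique subgroup of order $k$ of $\GF_{2^f}^*$. By (ii) this subgroup is exactly $\{\,\ol{\zeta_k^{\,i}}:0\le i\le k-1\,\}$, so there is a unique index $i$ (hence a unique $k$th root of unity $\zeta=\zeta_k^i\in\cl O_K$) with $\gamma^{(2^f-1)/k}\equiv\zeta\pmod{\cl P}$.

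The main obstacle, if anything, is the invocation of the cyclotomic splitting law in step (i); every other part is elementary. The separability step (ii) is the genuine engine that lets one pull the distinctness of $k$th roots of unity back from the residue field, and the rest of the lemma is basically the Hensel-style observation that the reduction-mod-$\cl P$ map identifies the group of $k$th roots of unity in $\cl O_K$ with the subgroup of order $k$ in $\GF_{2^f}^*$.
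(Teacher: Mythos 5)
Your proof is correct. The paper does not prove this lemma at all --- it is imported wholesale from the cited number theory reference --- so there is no in-paper argument to compare against; but your three steps (the cyclotomic splitting law giving residue degree $f=\ord_k(2)$, separability of $X^k-1$ modulo $\cl{P}$ because $k$ is odd, and then reading off the unique subgroup of order $k$ in $\GF_{2^f}^*$) constitute exactly the standard proof one would find in that reference. The only point worth tightening in your bullet (ii) is the final inference: rather than saying the images ``must be the $k$ distinct roots,'' it is cleaner to note that $X^k-1=\prod_{i=0}^{k-1}(X-\zeta_k^i)$ in $\cl{O}_K[X]$ reduces to the same factorization over $\cl{O}_K/\cl{P}$, so a coincidence $\ol{\zeta_k^i}=\ol{\zeta_k^j}$ with $i\neq j$ would force a repeated root, contradicting the separability you just established. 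This is a cosmetic repair, not a gap.
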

\end{subsection}
\begin{subsection}{Hasse Derivative}
Hasse derivative is a powerful tool to study the multiplicities of a root in a polynomial over a field of
finite characteristic. Let $f(X)=a_0+a_1X+\cdots+a_nX^n\in\GF[X]$, where $\GF$ is a finite field.
The $t$th Hasse derivative of $f(X)$ is defined by
\[f(X)^{(t)}:=\binom{t}{t}a_t+\binom{t+1}{t}a_{t+1}X+\cdots+\binom{n}{t}a_nX^{n-t}\]
Hasse derivative has following similar property to normal derivative.
\begin{lemma}[\cite{FiniteF}, Lemma 6.51]\label{lem:Hasse}
Let $f(X)\in\GF[X]$ be a nonzero polynomial, $\xi\in\GF$. Then the multiplicities of $\xi$ in $f(X)$
is $t$ if and only if $f(\xi)^{(0)}=f(\xi)^{(1)}=\cdots=f(\xi)^{(t-1)}=0$, $f(\xi)^{(t)}\neq0$.
\end{lemma}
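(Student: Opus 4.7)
The plan is to establish the Taylor-type identity
$$f(X) = \sum_{k=0}^{n} f(\xi)^{(k)}\, (X-\xi)^k$$
and then read off the multiplicity of $\xi$ in $f$ directly from this expansion, using the uniqueness of coefficients in the basis $\{(X-\xi)^k\}_{k\geq 0}$ of $\GF[X]$. The entire proof reduces to this one formal identity; both directions of the equivalence follow by inspection.

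First I would establish the expansion by applying the binomial theorem monomial by monomial. Writing $X^j = ((X-\xi)+\xi)^j = \sum_{k=0}^{j}\binom{j}{k}\xi^{j-k}(X-\xi)^k$, substituting into $f(X)=\sum_{j=0}^{n}a_j X^j$, and swapping the order of summation, the coefficient of $(X-\xi)^k$ becomes $\sum_{j\geq k}\binom{j}{k}a_j\xi^{j-k}$. This is exactly the value at $X=\xi$ of the $k$th Hasse derivative as defined in the paper. Since $\binom{j}{k}$ reduces unambiguously modulo the characteristic of $\GF$, the identity is purely formal and valid in any characteristic.

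Next I would translate the expansion into the desired equivalence. If the multiplicity of $\xi$ in $f$ is $s$, write $f(X)=(X-\xi)^s g(X)$ with $g(\xi)\neq 0$; expand $g(X)$ around $\xi$ similarly and multiply by $(X-\xi)^s$. By the uniqueness of the representation in the basis $\{(X-\xi)^k\}$, this forces $f(\xi)^{(k)}=0$ for $k<s$ and $f(\xi)^{(s)}=g(\xi)\neq 0$. Conversely, if $t$ is the smallest index for which $f(\xi)^{(t)}\neq 0$, the Hasse expansion factors as $(X-\xi)^t\bigl(\sum_{k\geq t}f(\xi)^{(k)}(X-\xi)^{k-t}\bigr)$, and the bracketed factor evaluates to $f(\xi)^{(t)}\neq 0$ at $X=\xi$; hence the multiplicity is precisely $t$.

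The main obstacle, which is really no obstacle at all here, is simply ensuring that the binomial coefficients $\binom{j}{k}$ appearing in the double sum are handled correctly once reduced to $\GF$. Because the Hasse derivative is defined using $\binom{j}{k}$ directly as its coefficients rather than $j!/(k!(j-k)!)$ split apart, the derivation requires no division by $k!$ and so bypasses the familiar failure of ordinary Taylor expansion in positive characteristic. This is exactly the feature that makes Hasse derivatives the right tool for detecting multiplicities over finite fields, and it is the reason the lemma can be invoked freely later even when $\op{char}(\GF)$ divides small integers.
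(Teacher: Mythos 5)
Your proof is correct and complete: the Hasse--Taylor expansion $f(X)=\sum_k f(\xi)^{(k)}(X-\xi)^k$, established monomial by monomial without any division by $k!$, together with uniqueness of coefficients in the basis $\{(X-\xi)^k\}$, gives both directions exactly as you describe. The paper does not prove this lemma at all --- it is quoted from the reference (Lidl and Niederreiter, Lemma 6.51) --- and your argument is essentially the standard one found there, so there is nothing further to compare.
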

\end{subsection}
\begin{subsection}{Lucas's Congruence}
Lucas's congruence deals with binomial coefficients modulo a prime number.
\begin{lemma}[\cite{Lucas}]\label{lem:Lucas}
Let $n,k$ be positive integers. Then
\[\binom{n}{k}\equiv\binom{n_0}{k_0}\binom{n_1}{k_1}\cdots\binom{n_l}{k_l}\pmod{2}\]
where $n=n_0+n_12+\cdots+n_l2^l$ and $k=k_0+k_12+\cdots+k_l2^l$ are $2$-adic representation of $n$ and $k$,
respectively.
\end{lemma}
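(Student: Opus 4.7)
The plan is to reduce everything to a generating-function identity over $\GF_2[X]$. I would start from the Frobenius-type identity $(1+X)^2 \equiv 1 + X^2 \pmod 2$ (which holds because $\binom{2}{1}=2$). Squaring this identity $j$ times yields $(1+X)^{2^j} \equiv 1+X^{2^j} \pmod 2$ for every $j\geq 0$, and this can be recorded as a single-line induction.

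Writing $n = n_0 + n_1\cdot 2 + \cdots + n_l\cdot 2^l$ with $n_i\in\{0,1\}$, I would then factor
\[(1+X)^n = \prod_{i=0}^l (1+X)^{n_i 2^i} \equiv \prod_{i=0}^l (1+X^{2^i})^{n_i} \pmod 2\]
and compare coefficients of $X^k$ on both sides. The left-hand side gives $\binom{n}{k}$. The right-hand side, after expansion, is a sum over tuples $(k_0,\ldots,k_l)\in\{0,1\}^{l+1}$ with $\sum_i k_i 2^i = k$ of the products $\prod_i \binom{n_i}{k_i}$.

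The key observation is that, because each $k_i\in\{0,1\}$, the constraint $\sum_i k_i 2^i = k$ singles out exactly the binary expansion of $k$, by uniqueness of such expansions. Hence only one tuple contributes, and this tuple is precisely the one given by the $2$-adic digits of $k$; this immediately yields the claimed congruence. The only minor subtlety --- the sole potential obstacle --- is bookkeeping for digit positions where $k_i=1$ but $n_i=0$: there the factor $\binom{0}{1}=0$ already forces the product to vanish, matching the fact that in this case the $k$-th monomial does not appear in the expansion of $\prod_i (1+X^{2^i})^{n_i}$ at all, so that $\binom{n}{k}\equiv 0\pmod 2$. If $k$ has more binary digits than $n$, one simply pads $n$'s expansion with zeros and the same vanishing mechanism applies.
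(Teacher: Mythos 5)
Your proof is correct and complete: the reduction to the identity $(1+X)^{2^j}\equiv 1+X^{2^j}\pmod{2}$, the factorization of $(1+X)^n$ according to the binary digits of $n$, and the comparison of coefficients of $X^k$ (using uniqueness of binary expansions, with the vanishing factor $\binom{0}{1}=0$ handling digits where $k_i>n_i$) constitute the standard generating-function proof of Lucas's congruence in characteristic $2$. Note that the paper does not prove this lemma at all --- it is quoted from the literature with a citation --- so there is no in-paper argument to compare against; your write-up supplies a valid self-contained proof of the cited result.
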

\end{subsection}
\end{section}

\begin{section}{Main Results}\label{sec:main}
From now on, we always assume $d=2$, which implies $p$ is odd. Let $\beta$ be a root of $X^T-1$ over $\GF_2$,
$k=\ord(\beta)$, $f$ the smallest positive integer such that $k\mid(2^f-1)$. Then $k$ is odd divisor of $T=q-1$
and $\GF_2(\beta)=\GF_{2^f}$. We further assume $k>1$ since $k=1$ implies $\beta=1$ which was already studied
in details \cite{MeidlF206}. Let $T=q-1=2^uT'$ where $u\geq1$ and $T'$ is odd. Then $\beta$ is $2^u$th multiple
root of $X^T-1$. Our aim is to determine the multiplicities of $\beta$ in $S(X)$. According to lemma
\ref{lem:Hasse}, it is enough to decide whether $S(\beta)^{(t)}=0$ for $0\leq t\leq 2^u-1$.

\begin{theorem}\label{thm:1}
Let notations be as above. For any $t\in\{0,1,\cdots,2^u-1\}$,
\[S(\beta)^{(t)}=0\ \ \Lra\ \ \binom{T/2}{t}+\sum_{n=0}^{T-1}\binom{n}{t}\rho(\alpha^n+1)\chi(\alpha^n)\equiv0\pmod{2P}\]
\end{theorem}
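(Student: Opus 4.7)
The plan is to lift the whole situation to the cyclotomic ring $\cl{O}_K$, derive an explicit identity expressing $2\tilde\beta^{\,t}S^{(t)}(\tilde\beta)$ as the theorem's character sum plus an auxiliary binomial--geometric error $E_t$, and then show that $E_t$ vanishes modulo $2\cl{P}$ via Lucas's congruence combined with an analysis of $\nu_{\cl{P}}$.

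First I would obtain an integer identity for $s_n$. Since $d=2$ one has $s_n\in\{0,1\}$; using $\rho(0)=0$ together with a separate check at the exceptional index $n=T/2$ (where $\alpha^n+1=0$), one checks
\[
2s_n \;=\; 1-\rho(\alpha^n+1)-\delta_{n,T/2}.
\]
Substituting this into the definition of the Hasse derivative, evaluating at a lift $\tilde\beta\in\cl{O}_K$ of $\beta$ (a primitive $k$-th root of unity reducing to $\beta$ modulo $\cl{P}$, as supplied by Lemma \ref{lem:ideal}), multiplying by $\tilde\beta^{\,t}$, and using $\chi(\alpha^n)=\tilde\beta^{\,n}$ together with $\tilde\beta^{\,T/2}=1$ (which holds because $k\mid T'$), produces the key identity
\[
2\tilde\beta^{\,t}S^{(t)}(\tilde\beta)
\;=\; E_t \;-\;\binom{T/2}{t}\;-\;\sum_{n=0}^{T-1}\binom{n}{t}\rho(\alpha^n+1)\chi(\alpha^n),
\qquad E_t:=\sum_{n=0}^{T-1}\binom{n}{t}\tilde\beta^{\,n}.
\]

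The crux is then to show $E_t\equiv 0\pmod{2\cl{P}}$ for every $t\in\{0,1,\dots,2^u-1\}$. By Lucas's congruence (Lemma \ref{lem:Lucas}), when $t<2^u$ the residue $\binom{n}{t}\bmod 2$ depends only on $n\bmod 2^u$; writing $n=a\cdot 2^u+b$ with $0\le b<2^u$ and $0\le a<T'$ factors $E_t$ modulo $2$ as
\[
\Bigl(\sum_{b=0}^{2^u-1}\binom{b}{t}\tilde\beta^{\,b}\Bigr)\!\Bigl(\sum_{a=0}^{T'-1}(\tilde\beta^{\,2^u})^{a}\Bigr),
\]
and the inner geometric sum vanishes because $\tilde\beta^{\,2^u}$ has order $k\mid T'$ and is nontrivial. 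This already gives $E_t\in 2\cl{O}_K$; sharpening to $E_t\in 2\cl{P}$ uses the Hasse-differentiated relation $(X-1)g^{(t)}+g^{(t-1)}=\binom{T}{t}X^{T-t}$ for $g(X)=(X^T-1)/(X-1)$, combined with Lemma \ref{lem:ideal}, to control the $\cl{P}$-adic valuation of the surviving contributions term by term.

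Once $E_t\equiv 0\pmod{2\cl{P}}$ is in place, the theorem follows quickly: since $\tilde\beta^{\,t}$ is a unit at $\cl{P}$ and $2x\in 2\cl{P}\Leftrightarrow x\in\cl{P}$ in the characteristic-zero domain $\cl{O}_K$, one has $S^{(t)}(\beta)=0$ in $\GF_{2^f}$ iff $2\tilde\beta^{\,t}S^{(t)}(\tilde\beta)\in 2\cl{P}$, which by the identity is equivalent to $\binom{T/2}{t}+\sum\binom{n}{t}\rho(\alpha^n+1)\chi(\alpha^n)\equiv 0\pmod{2\cl{P}}$. The hardest part is the sharpening $E_t\in 2\cl{P}$: Lucas gives only $E_t\in 2\cl{O}_K$ out of the box, and descending to the thinner ideal $2\cl{P}$ is where the $\cl{P}$-adic cancellations provided by the recurrence on the $g^{(t)}(\tilde\beta)$ have to be tracked carefully.
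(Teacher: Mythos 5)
Your reduction is structurally the same as the paper's: lift to $\cl{O}_K$ via Lemma \ref{lem:ideal}, use the exact integer identity $2s_n=1-\rho(\alpha^n+1)-\delta_{n,T/2}$, and reduce everything to showing that the auxiliary sum $E_t=\sum_{n=0}^{T-1}\binom{n}{t}\zeta^n$ dies modulo $2\cl{P}$. (The paper does this step silently, by tacitly replacing $\binom{n}{t}$ with its parity before invoking Lemma \ref{lem:character2}; you have correctly recognized that this replacement costs an element of $2\cl{O}_K$ that still has to be pushed into the thinner ideal $2\cl{P}$.) The difficulty is that the sharpening you defer to the end is not merely ``the hardest part'' --- it is false. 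Your own recurrence $(X-1)g^{(t)}+g^{(t-1)}=\binom{T}{t}X^{T-t}$ decides the matter: since $\zeta-1$ is a unit at $\cl{P}$ and $\nu_{\cl{P}}\bigl(\binom{T}{t}\bigr)=u-\nu_2(t)$ for $0<t<2^u$ (Kummer's theorem), one gets $E_1=T/(\zeta-1)$ exactly, hence $\nu_{\cl{P}}(E_1)=u$, and inductively $\nu_{\cl{P}}(E_t)=u-\lfloor\log_2 t\rfloor$ for $1\le t<2^u$. Membership in $2\cl{P}$ forces $\nu_{\cl{P}}\geq2$, so $E_t\notin2\cl{P}$ for every $t$ in the range $2^{u-1}\leq t\leq 2^u-1$, which the theorem explicitly includes. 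No amount of tracking of $\cl{P}$-adic cancellations will close this.

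Nor is the defect repairable while keeping integer binomial coefficients on the right-hand side: the stated equivalence itself fails. Take $q=31$, $\alpha=3$, $k=3$, $t=1$, so $u=1$ and $\cl{P}=(2)$ is inert in $\bb{Z}[\zeta_3]$. One computes $S^{(1)}(\beta)=1\neq0$, while $\binom{15}{1}+\sum_{n=0}^{29}n\,\rho(3^n+1)\zeta_3^{\,n}=80+88\zeta_3\in(4)=2\cl{P}$, so the congruence holds but $S(\beta)^{(1)}\ne 0$. What is true --- and what Theorem \ref{thm:2} and the rest of the paper actually rely on --- is the version in which $\binom{n}{t}$ and $\binom{T/2}{t}$ are replaced by their residues modulo $2$, i.e.\ by $\sum_{i\in I_t}\delta_{n\equiv i\ (\op{mod}\ 2^h)}$. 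With those $\{0,1\}$-coefficients the analogue of your $E_t$ is a genuine sum over full cosets of $\langle\alpha^{2^h}\rangle$ and vanishes identically in $\bb{C}$ by Lemma \ref{lem:character2}, and your identity then finishes the argument exactly as you describe. So the correct move is not to estimate $\nu_{\cl{P}}(E_t)$ but to reduce the coefficients modulo $2$ \emph{before} lifting to $\cl{O}_K$; as written, both your proposal and the theorem's literal statement founder on the same point.
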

\begin{proof}
Let $K=\bb{Q}(\zeta_k)$, $\cl{O}_K=\bb{Z}[\zeta_k]$ and $\cl{P}$ be a stipulated prime ideal of $K$ over $2$.
By lemma \ref{lem:ideal}, $\GF_2(\beta)=\GF_{2^f}\cong\cl{O}_K/\cl{P}$. Thus we can introduce a field isomorphism
$\phi:\GF(\beta)\ra\cl{O}_K/\cl{P}$ and
\begin{align*}
S(\beta)^{(t)}=0\ \Lra\ & \beta^tS(\beta)^{(t)}=0\ \Lra\ \sum_{n=0}^{T-1}\binom{n}{t}s_n\beta^n=0 \\
\Lra\ & \phi(\sum_{n=0}^{T-1}\binom{n}{t}s_n\beta^n)=0\ \Lra\ \sum_{n=0}^{T-1}\binom{n}{t}s_n\phi(\beta)^n=0
\end{align*}
Notice that $\beta$ has order $k$, i.e., $\beta=\theta^{\frac{2^f-1}{k}}$ for some $\theta\in\GF_{2^f}^*$.
It follows from lemma \ref{lem:ideal} that there exists a unique $k$th primitive root of unity $\zeta$ such that
$\phi(\beta)\equiv\zeta\pmod{\cl{P}}$. Define the function $\chi:\GF_q^*\ra K^*$ by $\chi(\alpha^i):=\zeta^i$
for arbitrary $i$. It is simple to verify that $\chi$ is a multiplicative character of $\GF_q$ and $\ord(\chi)=k$.
Then we have
\begin{align*}
S(\beta)^{(t)}=0\ \Lra\ & \sum_{n=0}^{T-1}\binom{n}{t}s_n\zeta^n\equiv0\pmod{\cl{P}} \\
\Lra\ & \sum_{n=0}^{T-1}\binom{n}{t}s_n\chi(\alpha^n)\equiv0\pmod{\cl{P}}
\end{align*}
Recall that $s_n=\begin{cases}
1 & \mbox{if }\alpha^n+1\in\alpha\langle\alpha^2\rangle \\
0 & \mbox{if }\alpha^n+1\not\in\alpha\langle\alpha^2\rangle
\end{cases}$. We introduce the quadratic character $\rho=\eta_{\frac{1}{2}}$ and $s_n$ can be expressed as
$\frac{1-\rho(\alpha^n+1)-\delta_{n,T/2}}{2}$. Thus,
\begin{align*}
S(\beta)^{(t)}=0\ \Lra\ & \sum_{n=0}^{T-1}\binom{n}{t}2s_n\chi(\alpha^n)\equiv0\pmod{2\cl{P}} \\
\Lra\ & \sum_{n=0}^{T-1}\binom{n}{t}(1-\rho(\alpha^n+1)-\delta_{n,T/2})\chi(\alpha^n)\equiv0\pmod{2\cl{P}} \\
\Lra\ & \sum_{n=0}^{T-1}\binom{n}{t}\chi(\alpha^n)-\sum_{n=0}^{T-1}\binom{n}{t}\rho(\alpha^n+1)\chi(\alpha^n)-\binom{T/2}{t}\chi(\alpha^{T/2})\equiv0\pmod{2\cl{P}}
\end{align*}
Observe that $\binom{n}{t}\equiv1\pmod2$ if and only if $n$ is congruent to some specific numbers modulo $2^h$
where $h$ is the length of binary representation of $t$ by lemma \ref{lem:Lucas}. The first summation can be
divided into several summations. Each summation has the form $\sum_{n\equiv i\pmod{2^h}}\chi(\alpha^n)$.
Since $t\leq 2^u-1$, we have $2^h\mid 2^u$ and $\sum_{n\equiv i\pmod{2^h}}\chi(\alpha^n)=
\sum_{x\in\alpha^i\langle\alpha^{2^h}\rangle}\chi(x)$, which is equal to $0$ due to lemma \ref{lem:character2}.
On the other hand, $\chi(\alpha^{T/2})=\chi(-1)=1$ follows from the fact $\chi$ has odd order.
Therefore,
\[S(\beta)^{(t)}=0\ \Lra\ -\sum_{n=0}^{T-1}\binom{n}{t}\rho(\alpha^n+1)\chi(\alpha^n)-\binom{T/2}{t}\equiv0\pmod{2\cl{P}}\]
which completes the proof.
\end{proof}

As we already mentioned in the proof, the summation $\sum_{n=0}^{T-1}\binom{n}{t}\rho(\alpha^n+1)\chi(\alpha^n)$
can be divided into several summations, where each is summed over a cyclotomic coset. Let
$h=\begin{cases}
0 & \mbox{if }t=0 \\
\min\{i\geq1:t\leq 2^i\} & \mbox{if }t>0
\end{cases}$ be the length of binary representation of $t$ and
\[I_t=\{0\leq i\leq 2^h-1:i\succeq t\}\]
where $a\succeq b$ means each binary bit of $a$ is great than or equal to that of $b$, i.e.,
$\binom{a}{b}\equiv1\pmod{2}$ due to Lucas's congruence. Note that, $\binom{n}{t}\equiv1\pmod2$ if and only if
$n$ is congruent to some number in $I_t$ modulo $2^h$. Now we can use $I_t$ and lemma \ref{lem:character1}
to write previous theorem into an expression involving Jacobi sums.

\begin{theorem}\label{thm:2}
Under the notations above, $S(\beta)^{(t)}=0$ is equivalent to
\[2^h\binom{T/2}{t}+\sum_{i\in I_t}\sum_{j=0}^{2^h-1}\eta_{\frac{j}{2^h}}(-1)\zeta_{2^h}^{-ij}K(\eta_{\frac{j}{2^h}}\chi)\equiv0\pmod{2^{h+1}\cl{P}\bb{Z}[\zeta_{2^hk}]}\]
\end{theorem}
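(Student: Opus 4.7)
The plan is to start from the congruence of Theorem \ref{thm:1}, multiply through by $2^h$, and then use Lemma \ref{lem:character1} together with Lucas's congruence to rewrite the character sum in terms of Jacobi sums. Since the manipulation forces $2^h$-th roots of unity into play, the arithmetic should be done in $\bb{Z}[\zeta_{2^hk}]$, matching the modulus $2^{h+1}\cl{P}\bb{Z}[\zeta_{2^hk}]$ in the target statement.

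The core step combines two ingredients. First, by Lemma \ref{lem:Lucas}, one has $\binom{n}{t} \equiv \sum_{i \in I_t}\mbf{1}[\alpha^n \in \alpha^i\langle\alpha^{2^h}\rangle] \pmod{2}$, so after multiplying by $2^h$ the integer $2^h\binom{n}{t}$ agrees with $2^h\sum_{i\in I_t}\mbf{1}[\alpha^n\in\alpha^i\langle\alpha^{2^h}\rangle]$ up to a multiple of $2^{h+1}$. Second, Lemma \ref{lem:character1} applied with $d=2^h$ (valid since $2^h\mid 2^u\mid T$) yields
\[2^h\,\mbf{1}[\alpha^n\in\alpha^i\langle\alpha^{2^h}\rangle] \;=\; \sum_{j=0}^{2^h-1}\eta_{j/2^h}(\alpha^{n-i}) \;=\; \sum_{j=0}^{2^h-1}\zeta_{2^h}^{-ij}\,\eta_{j/2^h}(\alpha^n).\]
Substituting and interchanging the order of summation converts $2^h\sum_n\binom{n}{t}\rho(\alpha^n+1)\chi(\alpha^n)$ (modulo the error above) into $\sum_{i\in I_t}\sum_{j}\zeta_{2^h}^{-ij}\sum_{n}\eta_{j/2^h}(\alpha^n)\rho(\alpha^n+1)\chi(\alpha^n)$, whose inner $n$-sum equals $\sum_{x\in\GF_q^*}(\eta_{j/2^h}\chi)(x)\,\rho(x+1)$. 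The substitution $y=-x$ (noting that $\chi(-1)=1$ because $\chi$ has odd order) turns this into $\eta_{j/2^h}(-1)\,J(\eta_{j/2^h}\chi,\rho) = \eta_{j/2^h}(-1)\,K(\eta_{j/2^h}\chi)$, using the symmetry of Jacobi sums together with the definition $K(\chi)=J(\rho,\chi)$.

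The main obstacle I expect is the modulus bookkeeping. The Lucas substitution leaves a residue of the form $2^{h+1}R$ with $R\in\cl{O}_K$, and one must verify that this is absorbed into the target ideal $2^{h+1}\cl{P}\bb{Z}[\zeta_{2^hk}]$ once the congruence of Theorem \ref{thm:1} is lifted from $\cl{O}_K=\bb{Z}[\zeta_k]$ to $\bb{Z}[\zeta_{2^hk}]$. This enlargement is essential in order for the $\eta_{j/2^h}$-values (which live in $\bb{Z}[\zeta_{2^h}]$) to make sense alongside the original congruence; the delicate part is tracking how $\cl{P}$ behaves in $\bb{Z}[\zeta_{2^hk}]$ and checking that the error from the Lucas step, combined with the $2^{h+1}\cl{P}$-congruence coming from Theorem \ref{thm:1}, really collapses to the stated ideal modulus. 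Once this bookkeeping is settled, the two congruences are equivalent term-by-term and the theorem follows.
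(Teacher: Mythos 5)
Your proposal matches the paper's own proof essentially step for step: start from Theorem \ref{thm:1}, decompose $\binom{n}{t}$ via Lucas's congruence into indicators over the residues $i\in I_t$, multiply by $2^h$ and apply Lemma \ref{lem:character1} with $d=2^h$ to introduce the characters $\eta_{j/2^h}$, then swap sums and substitute $x\mapsto -x$ (using $\chi(-1)=1$) to produce the Jacobi sums $\eta_{j/2^h}(-1)K(\eta_{j/2^h}\chi)$ modulo $2^{h+1}\cl{P}\bb{Z}[\zeta_{2^hk}]$. The modulus bookkeeping you flag is exactly the point the paper also treats lightly, so the proposal is correct and takes the same approach.
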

\begin{proof}
For a proposition $Q$, put
\[\delta_Q=\begin{cases}
1 & \mbox{if }Q\mbox{ is true} \\
0 & \mbox{if }Q\mbox{ is false}
\end{cases}\]
It follows that
\begin{align*}
S(\beta)^{(t)}=0\ \Lra\ & \binom{T/2}{t}+\sum_{n=0}^{T-1}\binom{n}{t}\rho(\alpha^n+1)\chi(\alpha^n)\equiv0\pmod{2\cl{P}} \\
\Lra\ & \binom{T/2}{t}+\sum_{n=0}^{T-1}\sum_{i\in I_t}\delta_{n\equiv i(\op{mod}2^h)}\rho(\alpha^n+1)\chi(\alpha^n)\equiv0\pmod{2\cl{P}} \\
\Lra\ & \binom{T/2}{t}+\sum_{i\in I_t}\sum_{n=0}^{T-1}\delta_{\alpha^{n-i}\in\langle\alpha^{2^h}\rangle}\rho(\alpha^n+1)\chi(\alpha^n)\equiv0\pmod{2\cl{P}} \\
\Lra\ & 2^h\binom{T/2}{t}+\sum_{i\in I_t}\sum_{n=0}^{T-1}2^h\delta_{\alpha^{n-i}\in\langle\alpha^{2^h}\rangle}\rho(\alpha^n+1)\chi(\alpha^n)\equiv0\pmod{2^{h+1}\cl{P}\bb{Z}[\zeta_{2^h}k]} \\
\Lra\ & 2^h\binom{T/2}{t}+\sum_{i\in I_t}\sum_{n=0}^{T-1}\sum_{j=0}^{2^h-1}\eta_{\frac{j}{2^h}}(\alpha^{n-i})\rho(\alpha^n+1)\chi(\alpha^n)\equiv0\pmod{2^{h+1}\cl{P}\bb{Z}[\zeta_{2^h}k]} \\
\Lra\ & 2^h\binom{T/2}{t}+\sum_{i\in I_t}\sum_{j=0}^{2^h-1}\zeta_{2^h}^{-ij}\sum_{n=0}^{T-1}\rho(\alpha^n+1)\eta_{\frac{j}{2^h}}\chi(\alpha^n)\equiv0\pmod{2^{h+1}\cl{P}\bb{Z}[\zeta_{2^h}k]} \\
\Lra\ & 2^h\binom{T/2}{t}+\sum_{i\in I_t}\sum_{j=0}^{2^h-1}\zeta_{2^h}^{-ij}\sum_{x\in\GF_q}\rho(x+1)\eta_{\frac{j}{2^h}}\chi(x)\equiv0\pmod{2^{h+1}\cl{P}\bb{Z}[\zeta_{2^h}k]} \\
\Lra\ & 2^h\binom{T/2}{t}+\sum_{i\in I_t}\sum_{j=0}^{2^h-1}\eta_{\frac{j}{2^h}}\chi(-1)\zeta_{2^h}^{-ij}\sum_{x\in\GF_q}\rho(-x+1)\eta_{\frac{j}{2^h}}\chi(x)\equiv0\pmod{2^{h+1}\cl{P}\bb{Z}[\zeta_{2^h}k]} \\
\Lra\ & 2^h\binom{T/2}{t}+\sum_{i\in I_t}\sum_{j=0}^{2^h-1}\eta_{\frac{j}{2^h}}(-1)\zeta_{2^h}^{-ij}J(\rho,\eta_{\frac{j}{2^h}}\chi)\equiv0\pmod{2^{h+1}\cl{P}\bb{Z}[\zeta_{2^h}k]}
\end{align*}
It completes the proof.
\end{proof}

For specific $t$, one can obtain explicit condition by theorem \ref{thm:2}.
We give several examples to show this and omit the details of the proof.

\begin{proposition}[\cite{Alaca16}, Theorem 3.1]\label{prop:1}
Keep the notations as above. $S(\beta)=0$ is equivalent to $1+K(\chi)\equiv0\pmod{2\cl{P}}$.
\end{proposition}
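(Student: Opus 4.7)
The plan is to recognize Proposition~\ref{prop:1} as the $t=0$ instance of Theorem~\ref{thm:2}. Since the zeroth Hasse derivative is the polynomial itself, $S(\beta)^{(0)} = S(\beta)$, so the condition to prove is exactly the $t=0$ specialization of the equivalence just established, and the work is purely bookkeeping.

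First I would simplify all the quantities appearing in Theorem~\ref{thm:2} at $t=0$. By definition, $h=0$, hence $2^h = 1$, the index set $I_0 = \{0 \leq i \leq 0 : i \succeq 0\}$ collapses to the singleton $\{0\}$, and the binomial coefficient $\binom{T/2}{0}$ equals $1$. The cyclotomic ring $\bb{Z}[\zeta_{2^h k}]$ becomes $\bb{Z}[\zeta_k] = \cl{O}_K$, so the ideal $2^{h+1}\cl{P}\bb{Z}[\zeta_{2^h k}]$ simplifies to $2\cl{P}$ inside $\cl{O}_K$.

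Next I would evaluate the double sum. Only the term $(i,j) = (0,0)$ contributes; since $\eta_0 = \varep$ is the trivial character, $\zeta_1 = 1$, and $\varep(-1) = 1$, this term equals $K(\varep \cdot \chi) = K(\chi)$. Substituting into the congruence of Theorem~\ref{thm:2} then yields
\[
1 + K(\chi) \equiv 0 \pmod{2\cl{P}},
\]
which is exactly the statement of Proposition~\ref{prop:1}.

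The only point requiring any care is the simultaneous collapse of the modulus and of the ambient ring at $h=0$; once one checks that the congruence genuinely lives in $\cl{O}_K$ rather than a larger cyclotomic extension, there is no real obstacle. I would therefore not expect any technical difficulty beyond reading off the four trivial values $h=0$, $I_0=\{0\}$, $\binom{T/2}{0}=1$, and $\eta_0=\varep$.
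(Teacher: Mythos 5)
Your proposal is correct and follows exactly the route the paper intends: the paper states that Propositions 1--4 are obtained by specializing Theorem~\ref{thm:2} to specific $t$ and explicitly omits the details, and your computation at $t=0$ (namely $h=0$, $I_0=\{0\}$, $\binom{T/2}{0}=1$, $\eta_{0}=\varep$, and the modulus collapsing to $2\cl{P}$ in $\cl{O}_K$) fills in precisely those omitted details. No gaps.
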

\begin{proposition}
Keep the notations as above. $S(\beta)^{(1)}=0$ is equivalent to
\[\begin{cases}
K(\chi)-K(\eta_{\frac{1}{2}}\chi)\equiv0\pmod{4\cl{P}} & \mbox{if }q\equiv1\pmod4 \\
2+K(\chi)+K(\eta_{\frac{1}{2}}\chi)\equiv0\pmod{4\cl{P}} & \mbox{if }q\equiv3\pmod4
\end{cases}\]
\end{proposition}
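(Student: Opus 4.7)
The plan is to specialize Theorem~\ref{thm:2} at $t=1$. For this value the binary length is $h=1$ and the index set is $I_1=\{1\}$ (the only $i\in\{0,1\}$ with $i\succeq 1$), so the outer sum collapses to a single term while the inner sum runs only over $j\in\{0,1\}$. Together with $\binom{T/2}{1}=T/2$, $\zeta_2=-1$, $\eta_0=\varep$, $\eta_{1/2}=\rho$, and $K(\varep\chi)=K(\chi)$, the congruence of Theorem~\ref{thm:2} reduces to
\[
T+K(\chi)-\rho(-1)\,K(\rho\chi)\equiv 0\pmod{4\cl{P}}.
\]

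Next I would split on the residue of $q$ modulo $4$ via $\rho(-1)=(-1)^{(q-1)/2}$. When $q\equiv 1\pmod 4$ we have $\rho(-1)=1$ and $T=q-1$ is divisible by $4$, so the integer term drops out and the condition simplifies to $K(\chi)-K(\rho\chi)\equiv 0\pmod{4\cl{P}}$. When $q\equiv 3\pmod 4$ we have $\rho(-1)=-1$ and $T\equiv 2\pmod 4$, so the condition becomes $2+K(\chi)+K(\rho\chi)\equiv 0\pmod{4\cl{P}}$. These are exactly the two stated alternatives.

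All of the substance has been extracted in Theorem~\ref{thm:2}; what remains is purely bookkeeping. The only point to watch is the reduction of the integer $T$ inside the modulus $4\cl{P}$, but since $\cl{P}$ lies over $2$ and $2$ is unramified in $\bb{Z}[\zeta_k]$ (because $k$ is odd), the $\cl{P}$-adic behaviour of an ordinary integer is controlled by its $2$-adic valuation, so working modulo $4$ on the integer side transports correctly. No fresh Gauss- or Jacobi-sum evaluation is needed for this proposition—such evaluations enter only in the subsequent semiprimitive application.
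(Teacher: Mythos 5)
Your specialization of Theorem~\ref{thm:2} at $t=1$ is the right starting point, and the bookkeeping $h=1$, $I_1=\{1\}$, $\zeta_2=-1$ leading to $T+K(\chi)-\rho(-1)K(\rho\chi)\equiv0\pmod{4\cl{P}}$ is carried out correctly. The gap is precisely at the step you flag as ``the only point to watch'': it is \emph{not} true that an ordinary integer may be reduced modulo $4$ inside the modulus $4\cl{P}$. Since $2$ is unramified in $\bb{Z}[\zeta_k]$, the ideal $4\cl{P}$ has $\cl{P}$-adic valuation $3$ (and valuation $2$ at the other primes over $2$), so $4\cl{P}\cap\bb{Z}=8\bb{Z}$; in particular $4\notin4\cl{P}$. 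Hence $T$ drops out only when $8\mid T$, i.e.\ $q\equiv1\pmod8$. Taking Theorem~\ref{thm:2} at face value, for $q\equiv5\pmod8$ your congruence is $4+K(\chi)-K(\rho\chi)\equiv0\pmod{4\cl{P}}$ and for $q\equiv7\pmod8$ it is $6+K(\chi)+K(\rho\chi)\equiv0\pmod{4\cl{P}}$, neither of which is the statement to be proved, and these are genuinely different conditions: for $q=13$, $k=3$ one has $\cl{P}=(2)$ in $\bb{Z}[\zeta_3]$, $K(\chi)-K(\rho\chi)=4\sqrt{-3}\notin(8)$ while $4+4\sqrt{-3}=-8\zeta_3^2\in(8)$, yet a direct computation of the sequence gives $S(\beta)^{(1)}=\beta^2\neq0$, so the version with the extra $4$ is false.

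What actually rescues the proposition is a point buried in Theorem~\ref{thm:2} itself: its proof replaces each $\binom{n}{t}$ by the mod-$2$ reduction $\sum_{i\in I_t}\delta_{n\equiv i\ (\op{mod}2^h)}$, and for the congruence to remain an equivalence the companion term must be replaced consistently by $2^h\bigl(\binom{T/2}{t}\bmod 2\bigr)$ rather than kept as $2^h\binom{T/2}{t}$ --- the two choices differ by an element of $2^{h+1}\bb{Z}$ that need not lie in $2^{h+1}\cl{P}\bb{Z}[\zeta_{2^hk}]$ (in the $q=13$ example this is exactly where the literal statement goes wrong). For $t=1$ the corrected term is $2\,(T/2\bmod 2)$, which equals $0$ when $q\equiv1\pmod4$ and $2$ when $q\equiv3\pmod4$, and then your case split yields exactly the stated dichotomy. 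So you land on the correct statement, but only because your erroneous mod-$4$ reduction of $T$ compensates for reading Theorem~\ref{thm:2} literally; as written the proof contains a false step, and the point it glosses over is where the real content of this proposition lies.
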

\begin{proposition}
Keep the notations as above. If $q\equiv1\pmod4$, then $S(\beta)^{(2)}=0$ is equivalent to
\[\begin{cases}
2K(\chi)-(1-\zeta_4)K(\eta_{\frac{1}{4}}\chi)-(1+\zeta_4)K(\eta_{\frac{3}{4}}\chi)\equiv0\pmod{8\cl{P}\bb{Z}[\zeta_{4k}]} & \mbox{if }q\equiv1\pmod8 \\
4+2K(\chi)+(1-\zeta_4)K(\eta_{\frac{1}{4}}\chi)+(1+\zeta_4)K(\eta_{\frac{3}{4}}\chi)\equiv0\pmod{8\cl{P}\bb{Z}[\zeta_{4k}]} & \mbox{if }q\equiv5\pmod8
\end{cases}\]
\end{proposition}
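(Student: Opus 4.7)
The plan is to specialize Theorem \ref{thm:2} directly to $t = 2$ and then simplify by bookkeeping. Since $t = 2$ has binary length $2$, we take $h = 2$, so $2^h = 4$ and $I_2 = \{i \in \{0,1,2,3\} : i \succeq 2\} = \{2,3\}$. The modulus in the conclusion of Theorem \ref{thm:2}, $2^{h+1}\cl{P}\bb{Z}[\zeta_{2^h k}] = 8\cl{P}\bb{Z}[\zeta_{4k}]$, already agrees with the statement, so only the constant term $4\binom{T/2}{2}$ and the double sum over $(i,j)\in\{2,3\}\times\{0,1,2,3\}$ need to be handled.

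For the constant term I would compute $\binom{T/2}{2} = (T/2)(T/2-1)/2$ modulo $2$. Since $q \equiv 1 \pmod 4$, $T/2 = (q-1)/2$ is even; writing $T/2 = 2s$ with $s = (q-1)/4$, one sees that $\binom{T/2}{2} = s(2s - 1)$ has the same parity as $s$. Thus $\binom{T/2}{2}$ is even precisely when $q \equiv 1 \pmod 8$ and odd precisely when $q \equiv 5 \pmod 8$, matching the constants $0$ and $4$ appearing in the two cases.

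Next I would evaluate $\eta_{\frac{j}{4}}(-1) = \zeta_4^{j T/2}$ for $j = 0, 1, 2, 3$: this equals $1$ uniformly when $q \equiv 1 \pmod 8$ (since then $4 \mid T/2$) and equals $(-1)^j$ when $q \equiv 5 \pmod 8$. Plugging these together with the coefficients $\zeta_4^{-ij}$ into the eight-term double sum and regrouping by $j$, the $j = 2$ contribution (that is, the coefficient of $K(\eta_{\frac{1}{2}}\chi)$) vanishes in both cases because $\zeta_4^{-4}+\zeta_4^{-6} = 1 + (-1) = 0$. The remaining $j \in \{0,1,3\}$ sums assemble, up to an overall sign flip between the two cases, into $2K(\chi)$, $\pm(1-\zeta_4)K(\eta_{\frac{1}{4}}\chi)$ and $\pm(1+\zeta_4)K(\eta_{\frac{3}{4}}\chi)$, which are exactly the combinations claimed.

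The main obstacle is entirely computational: keeping track of the eight coefficients and the sign shift induced by $\eta_{\frac{j}{4}}(-1)$ in the case $q \equiv 5 \pmod 8$, and matching the constant term $4\binom{T/2}{2}$ to the right residue modulo $8$ in each case. No further machinery beyond Theorem \ref{thm:2} and the parity analysis of $\binom{T/2}{2}$ is needed.
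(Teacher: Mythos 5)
Your proposal is correct and is exactly the route the paper intends (the paper omits the details, saying these propositions follow from Theorem \ref{thm:2}): specializing to $t=2$, $h=2$, $I_2=\{2,3\}$, the coefficient of $K(\eta_{\frac{j}{4}}\chi)$ is $\eta_{\frac{j}{4}}(-1)(\zeta_4^{-2j}+\zeta_4^{-3j})$, which gives $2$, $\mp(1-\zeta_4)$, $0$, $\mp(1+\zeta_4)$ for $j=0,1,2,3$ in the two cases, and $4\binom{T/2}{2}\equiv 0$ or $4 \pmod 8$ according as $q\equiv 1$ or $5\pmod 8$, all matching the statement. The only nitpick is the phrase ``overall sign flip'': the $j=0$ coefficient $2$ does not change sign between the two cases (only the odd-$j$ terms do, since $\eta_{\frac{j}{4}}(-1)=(-1)^j$ when $q\equiv 5\pmod 8$), but your displayed conclusion already reflects this correctly.
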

\begin{proposition}
Keep the notations as above. If $q\equiv1\pmod4$, then $S(\beta)^{(3)}=0$ is equivalent to
\[\begin{cases}
K(\chi)+\zeta_4K(\eta_{\frac{1}{4}}\chi)-K(\eta_{\frac{1}{2}}\chi)-\zeta_4K(\eta_{\frac{3}{4}}\chi)\equiv0\pmod{8\cl{P}\bb{Z}[\zeta_{4k}]} & \mbox{if }q\equiv1\pmod8 \\
K(\chi)-\zeta_4K(\eta_{\frac{1}{4}}\chi)-K(\eta_{\frac{1}{2}}\chi)+\zeta_4K(\eta_{\frac{3}{4}}\chi)\equiv0\pmod{8\cl{P}\bb{Z}[\zeta_{4k}]} & \mbox{if }q\equiv5\pmod8
\end{cases}\]
\end{proposition}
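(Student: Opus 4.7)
The plan is to specialize Theorem \ref{thm:2} to the case $t=3$ and then simplify, using that $q\equiv 1\pmod 4$ (so $u\geq 2$ and $3\leq 2^u-1$, which is needed for Theorem \ref{thm:2} to apply).

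First I will identify the combinatorial data associated with $t=3$. The $2$-adic expansion is $3=11_2$, so $h=2$ and $2^h=4$. The index set $I_3=\{0\leq i\leq 3:i\succeq 3\}$ consists of the single element $i=3$, because $3$ has both low bits set. So Theorem \ref{thm:2} collapses to the condition
\[
4\binom{T/2}{3}+\sum_{j=0}^{3}\eta_{\frac{j}{4}}(-1)\,\zeta_4^{-3j}\,K(\eta_{\frac{j}{4}}\chi)\equiv 0\pmod{8\cl{P}\bb{Z}[\zeta_{4k}]}.
\]

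Next I will kill the constant term. By Lucas's congruence (Lemma \ref{lem:Lucas}) the binomial $\binom{T/2}{3}$ is odd iff the last two bits of $T/2$ are both $1$. When $q\equiv 1\pmod 8$ we have $T/2\equiv 0\pmod 4$, and when $q\equiv 5\pmod 8$ we have $T/2\equiv 2\pmod 4$; in both cases the bit of weight $1$ of $T/2$ is $0$, so $\binom{T/2}{3}$ is even and $4\binom{T/2}{3}\equiv 0\pmod{8}$. This eliminates the first summand.

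Then I will evaluate $\eta_{j/4}(-1)$ case by case. Since $-1=\alpha^{(q-1)/2}$ and $\eta_{j/4}(\alpha)=\zeta_4^{\,j}$, one has $\eta_{j/4}(-1)=\zeta_4^{\,j(q-1)/2}$. For $q\equiv 1\pmod 8$ this is $1$ for all $j$; for $q\equiv 5\pmod 8$ this equals $\zeta_4^{2j}=(-1)^j$. Plugging into the sum and using $\zeta_4^{-3}=\zeta_4$, $\zeta_4^{-6}=-1$, $\zeta_4^{-9}=-\zeta_4$, the four terms telescope to
\[
K(\chi)+\zeta_4 K(\eta_{1/4}\chi)-K(\eta_{1/2}\chi)-\zeta_4 K(\eta_{3/4}\chi)
\]
when $q\equiv 1\pmod 8$, and to
\[
K(\chi)-\zeta_4 K(\eta_{1/4}\chi)-K(\eta_{1/2}\chi)+\zeta_4 K(\eta_{3/4}\chi)
\]
when $q\equiv 5\pmod 8$, matching the statement.

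The whole argument is essentially bookkeeping once Theorem \ref{thm:2} is in place, so there is no real obstacle; the only place a sign error could creep in is in the evaluation of $\eta_{j/4}(-1)$ and in collapsing the powers of $\zeta_4$, so I would double-check those two computations carefully.
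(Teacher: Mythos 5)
The paper itself omits the proof of this proposition ("one can obtain explicit condition by theorem \ref{thm:2} \dots we omit the details"), and your route is exactly the intended one: specialize Theorem \ref{thm:2} at $t=3$, so $h=2$, $2^h=4$, $I_3=\{3\}$. Your identification of $I_3$, the applicability check $3\le 2^u-1$ for $q\equiv1\pmod 4$, the evaluation $\eta_{j/4}(-1)=\zeta_4^{jT/2}$ in the two subcases, and the collapse of the powers $\zeta_4^{-3j}$ are all correct; I reproduced the two displayed linear combinations and they match the statement.

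There is, however, one step whose justification does not suffice as written: the elimination of the constant term $4\binom{T/2}{3}$. The congruence is modulo the \emph{ideal} $8\cl{P}\bb{Z}[\zeta_{4k}]$, not modulo the rational integer $8$. Since $2$ is unramified in $\bb{Z}[\zeta_k]$ and has ramification index $2$ in $\bb{Z}[\zeta_{4k}]$, a rational integer lies in $8\cl{P}\bb{Z}[\zeta_{4k}]$ if and only if it is divisible by $16$ (indeed $8\notin 8\cl{P}\bb{Z}[\zeta_{4k}]$ because $1\notin\cl{P}$; one has $8\cl{P}\bb{Z}[\zeta_{4k}]\cap\bb{Z}=16\bb{Z}$). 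So you need $4\mid\binom{T/2}{3}$, whereas Lucas's congruence only tells you $\binom{T/2}{3}$ is even. The stronger divisibility is in fact true: writing $\binom{T/2}{3}=\frac{(T/2)(T/2-1)(T/2-2)}{6}$ with $T/2$ even, exactly one of $T/2$, $T/2-2$ is $\equiv 2\pmod 4$ and the other is divisible by $4$ (this is where the two subcases $q\equiv1,5\pmod 8$ enter), so $v_2\bigl(\binom{T/2}{3}\bigr)\ge 2+1-1=2$. With that replacement the argument is complete; but as stated, "divisible by $8$ as an integer, hence zero modulo $8\cl{P}\bb{Z}[\zeta_{4k}]$" is a non sequitur, and it is precisely the kind of modulus slip that this family of Jacobi-sum congruences is most sensitive to.
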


It seems difficult to give a simple criteria to determine whether $\beta$ is $t$th multiple root of $S(X)$
for general $t$. But we can obtain a relatively simple condition when $t=2^h$.

\begin{theorem}{\label{thm:3}}
With the notations above, $\beta$ has multiplicities at least $2^h$ as root of $S(X)$ if and only if
\[\begin{bmatrix}
1 & 1 & \cdots & 1 \\
1 & \zeta_{2^h}^{-1} & \cdots & \zeta_{2^h}^{-(2^h-1)} \\
\vdots & \vdots & \ddots & \vdots \\
1 & \zeta_{2^h}^{-(2^h-1)} & \cdots & \zeta_{2^h}^{-(2^h-1)(2^h-1)}
\end{bmatrix}
\begin{bmatrix}
\eta_{\frac{0}{2^h}}(-1)K(\eta_{\frac{0}{2^h}}\chi) \\
\eta_{\frac{1}{2^h}}(-1)K(\eta_{\frac{1}{2^h}}\chi) \\
\vdots \\
\eta_{\frac{2^h-1}{2^h}}(-1)K(\eta_{\frac{2^h-1}{2^h}}\chi)
\end{bmatrix}\equiv-2^h
\begin{bmatrix}\delta_0 \\ \delta_1 \\ \vdots \\ \delta_{2^h-1}\end{bmatrix}
\pmod{2^{h+1}\cl{P}\bb{Z}[\zeta_{2^hk}]}\]
where $\delta_i=\begin{cases}
1 & \mbox{if }T/2\equiv i\pmod{2^h} \\
0 & \mbox{if }T/2\not\equiv i\pmod{2^h}
\end{cases}$.
\end{theorem}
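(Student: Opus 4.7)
The plan is to combine Lemma \ref{lem:Hasse} with a uniform version of Theorem \ref{thm:2}, and then invert a triangular linear system. By Lemma \ref{lem:Hasse}, $\beta$ has multiplicity at least $2^h$ in $S(X)$ iff $S(\beta)^{(t)}=0$ for every $t\in\{0,1,\cdots,2^h-1\}$, giving $2^h$ scalar conditions to repackage as the $2^h$ rows of the claimed matrix equation.

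First I would redo the manipulations in the proof of Theorem \ref{thm:2} with $2^h$ in place of the $t$-dependent bit-length $h_t=\lceil\log_2(t+1)\rceil$. Since $h_t\leq h$ whenever $t<2^h$, Lucas's congruence (Lemma \ref{lem:Lucas}) still gives $\binom{n}{t}\equiv 1\pmod 2$ precisely when $n\pmod{2^h}$ lies in $I_t^{(h)}:=\{i\in\{0,1,\cdots,2^h-1\}:i\succeq t\}$, and Lemma \ref{lem:character1} applied at index $2^h$ Fourier-inverts $2^h\delta_{n\equiv i\,(\op{mod}\,2^h)}$ into $\sum_{j=0}^{2^h-1}\eta_{j/2^h}(\alpha^{n-i})$. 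The upshot is the uniform equivalence, valid for every $t\in\{0,1,\cdots,2^h-1\}$:
\[S(\beta)^{(t)}=0\ \Lra\ 2^h\binom{T/2}{t}+\sum_{i\in I_t^{(h)}}\sum_{j=0}^{2^h-1}\eta_{j/2^h}(-1)\zeta_{2^h}^{-ij}K(\eta_{j/2^h}\chi)\equiv 0\pmod{2^{h+1}\cl{P}\bb{Z}[\zeta_{2^h k}]}.\]

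Next, let $R_i:=\sum_{j=0}^{2^h-1}\eta_{j/2^h}(-1)\zeta_{2^h}^{-ij}K(\eta_{j/2^h}\chi)+2^h\delta_i$, so the theorem's claim is exactly ``$R_i\equiv 0$ for every $i$''. Applying Lucas to $T/2$, one has $\binom{T/2}{t}\equiv\sum_{i\in I_t^{(h)}}\delta_i\pmod 2$ (both sides indicate whether $T/2\pmod{2^h}$ lies in $I_t^{(h)}$), so after multiplying by $2^h$ each $t$-th condition above rewrites as $\sum_{i\in I_t^{(h)}}R_i\equiv 0\pmod{2^{h+1}\cl{P}\bb{Z}[\zeta_{2^h k}]}$. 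It remains to show that this triangular system on $R_0,\cdots,R_{2^h-1}$ is equivalent to each $R_i\equiv 0$. One direction is immediate. For the other, observe that the $2^h\times 2^h$ incidence matrix with entries $\binom{i}{t}\pmod 2$ is upper triangular (vanishing for $i<t$, since $i\succeq t$ forces $i\geq t$) with $1$'s on the diagonal, so downward induction on $t$ from $t=2^h-1$ recovers $R_i\equiv 0$ one at a time by subtracting previously-proven $R_j\equiv 0$ with $j>i$.

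The main obstacle I expect is the first step: carefully verifying the uniform-$2^h$ version of Theorem \ref{thm:2}. The replacement of $\binom{n}{t}$ by $\sum_{i\in I_t^{(h)}}\delta_{n\equiv i\,(\op{mod}\,2^h)}$ is only an identity mod $2$, so one has to argue that multiplication by $2^h$ legitimately upgrades the equivalence to mod $2^{h+1}\cl{P}\bb{Z}[\zeta_{2^h k}]$, and check that the Fourier step via Lemma \ref{lem:character1} at index $2^h$ indeed lands inside the ambient ring $\bb{Z}[\zeta_{2^h k}]$ asserted in the conclusion. Once this calibration is in place, the triangular combinatorics in the last step is purely mechanical.
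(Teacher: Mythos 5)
Your overall architecture is viable and genuinely different from the paper's: the paper never passes through a uniform version of Theorem \ref{thm:2}. Instead it introduces the coset partial sums $E_i=\sum_{n\equiv i\,(\op{mod}2^h)}s_n\beta^n$ inside $\GF_{2^f}$, observes that $S(\beta)^{(t)}=0$ is the homogeneous equation $\sum_{i\succeq t}E_i=0$, performs the unitriangular elimination entirely over the field of characteristic $2$ to reduce to ``$E_i=0$ for all $i$'', and only then converts each $E_i=0$ separately into a Jacobi-sum congruence. You do the conversion first and the elimination afterwards, inside the quotient of $\bb{Z}[\zeta_{2^hk}]$ by $2^{h+1}\cl{P}\bb{Z}[\zeta_{2^hk}]$. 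Both proofs hinge on the same observation that the incidence matrix $\bigl(\binom{i}{t}\bmod 2\bigr)$ is unitriangular, and your remark that its integer inverse makes the two systems equivalent over any commutative ring is fine.

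The genuine problem is your treatment of the constant terms. You pass from $2^h\binom{T/2}{t}$ to $2^h\sum_{i\in I_t^{(h)}}\delta_i$ by invoking Lucas modulo $2$ and ``multiplying by $2^h$''. That only yields a congruence modulo $2^{h+1}\bb{Z}$, and $2^{h+1}\bb{Z}[\zeta_{2^hk}]$ is \emph{not} contained in $2^{h+1}\cl{P}\bb{Z}[\zeta_{2^hk}]$: writing $\binom{T/2}{t}-\sum_{i\in I_t^{(h)}}\delta_i=2M$ with $M\in\bb{Z}$, the discrepancy $2^{h+1}M$ lies in the modulus only when $M\in\cl{P}\bb{Z}[\zeta_{2^hk}]\cap\bb{Z}=2\bb{Z}$, i.e.\ only when $\binom{T/2}{t}\equiv\sum_i\delta_i\pmod 4$, which is not guaranteed. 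So the step ``each $t$-th condition rewrites as $\sum_{i\in I_t^{(h)}}R_i\equiv0$'' fails as justified, and with it both directions of your equivalence. The repair is to note that in the actual derivation the constant never appears as the integer $\binom{T/2}{t}$: the binomial coefficients are reduced to the $0/1$ indicators $b_n=\delta_{n\bmod 2^h\in I_t^{(h)}}$ at the stage where the congruence is still modulo $\cl{P}$ (where $2\in\cl{P}$ makes this free), so the term contributed by $\delta_{n,T/2}$ is exactly $b_{T/2}=\sum_{i\in I_t^{(h)}}\delta_i$ \emph{before} anything is multiplied by $2^h$. With the uniform Theorem \ref{thm:2} stated with $b_{T/2}$ in place of $\binom{T/2}{t}$, your elimination goes through verbatim. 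The paper sidesteps this calibration entirely because its elimination happens on the homogeneous system in the $E_i$ over $\GF_{2^f}$, where there are no constants to match; the $\delta_i$ only appear afterwards, one equation at a time, as the exact count of indices $n\equiv i\,(\op{mod}2^h)$ with $n=T/2$. You correctly predicted that this calibration was the main obstacle, but the resolution you sketched is the wrong one.
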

\begin{proof}
Consider the summation
\[E_i=\sum_{\substack{n=0 \\ n\equiv i(\op{mod}2^h)}}^{T-1}s_n\beta^n\]
Recall that $S(\beta)^{(t)}=0$ if and only if $\sum_{n=0}^{T-1}\binom{n}{t}s_n\beta^n=0$.
The summation of the left hand side can break up into several $E_i$, and the equation becomes
\begin{equation}
\sum_{\substack{i=0 \\ i\succeq t}}^{2^h-1}E_i=0
\tag{$D_t$}
\end{equation}
From lemma \ref{lem:Hasse}, $\beta$ has multiplicities at least $2^h$ if and only if $D_t$ holds for
every $t\in\{0,1,\cdots,2^h-1\}$.

Next, we prove that equations $D_t$ where $t=0,1,\cdots,2^h-1$ are equivalent to equations $E_i=0$
where $i=0,1,\cdots,2^h-1$. It is trivial that equations $E_i$s implies $D_t$s.
We only need to prove that $E_i$s can be derived from $D_t$s. Notice that $D_{2^h-1}$ is actually the same as
$E_{2^h-1}=0$. Since these equations are all over a field of characteristic $2$, we can add $D_{2^h-1}$ to
other $D_t$s to cancel the summation $E_{2^h}-1$ occurring in other $D_t$s.
After the process, the equation $D_{2^h-2}$ becomes $E_{2^h-2}=0$ and we can use the same method to cancel
$E_{2^h-2}$ occurring in other $D_t$s where $0\leq t\leq 2^h-3$. Then $D_{2^h-3}$ becomes $E_{2^h-3}=0$.
Repeat these elementary operations until $D_1$ becomes $E_1=0$ and it completes the proof.

Finally, we use the same techniques in proof of theorem \ref{thm:1} and \ref{thm:2}
\begin{align*}
E_i=0\ \Lra\ & \sum_{n=0}^{T-1}\delta_{n\equiv i(\op{mod}2^h)}s_n\beta^n=0 \\
\Lra\ & \sum_{n\equiv i(\op{mod}2^h)}s_n\chi(\alpha^n)\equiv0\pmod{\cl{P}} \\
\Lra\ & \sum_{n\equiv i(\op{mod}2^h)}2s_n\chi(\alpha^n)\equiv0\pmod{2\cl{P}} \\
\Lra\ & \sum_{n\equiv i(\op{mod}2^h)}(1-\rho(\alpha^n+1)-\delta_{n,T/2})\chi(\alpha^n)\equiv0\pmod{2\cl{P}} \\
\Lra\ & \sum_{n\equiv i(\op{mod}2^h)}(-\rho(\alpha^n+1)-\delta_{n,T/2})\chi(\alpha^n)\equiv0\pmod{2\cl{P}} \\
\Lra\ & \delta_i+\sum_{n\equiv i(\op{mod}2^h)}\rho(\alpha^n+1)\chi(\alpha^n)\equiv0\pmod{2\cl{P}} \\
\Lra\ & 2^h\delta_i+2^h\sum_{n\equiv i(\op{mod}2^h)}\rho(\alpha^n+1)\chi(\alpha^n)\equiv0\pmod{2^{h+1}\cl{P}\bb{Z}[\zeta_{2^hk}]} \\
\Lra\ & 2^h\delta_i+\sum_{n=0}^{T-1}\sum_{j=0}^{2^h-1}\eta_{\frac{j}{2^h}}(\alpha^{n-i})\rho(\alpha^n+1)\chi(\alpha^n)\equiv0\pmod{2^{h+1}\cl{P}\bb{Z}[\zeta_{2^hk}]} \\
\Lra\ & \sum_{j=0}^{2^h-1}\zeta_{2^h}^{-ij}\sum_{x\in\GF_q}\rho(x+1)\eta_{\frac{j}{2^h}}\chi(x)\equiv-2^h\delta_i\pmod{2^{h+1}\cl{P}\bb{Z}[\zeta_{2^hk}]} \\
\Lra\ & \sum_{j=0}^{2^h-1}\zeta_{2^h}^{-ij}\eta_{\frac{j}{2^h}}(-1)K(\eta_{\frac{j}{2^h}}\chi)\equiv-2^h\delta_i\pmod{2^{h+1}\cl{P}\bb{Z}[\zeta_{2^hk}]}
\end{align*}
It completes the proof.
\end{proof}

\begin{theorem}
With the notations above, a necessary condition for $\beta$ having multiplicities at least $2^h$ is
$1+K(\eta_{\frac{j}{2^h}}\chi)\equiv0\pmod{2\cl{P}\bb{Z}[\zeta_{2^hk}]}$ for any $j\in\{0,1,\cdots,2^h-1\}$.
\end{theorem}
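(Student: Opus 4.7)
The plan is to invoke Theorem~\ref{thm:3} and invert the $2^h\times 2^h$ DFT-type matrix appearing there. Assume $\beta$ has multiplicity at least $2^h$ in $S(X)$, so Theorem~\ref{thm:3} delivers the matrix congruence
\[
M\,\mbf{v}\ \equiv\ -2^h\,\mbf{e}\ \pmod{2^{h+1}\cl{P}\bb{Z}[\zeta_{2^hk}]},
\]
where $M=(\zeta_{2^h}^{-ij})_{0\le i,j\le 2^h-1}$, the $j$th coordinate of $\mbf{v}$ is $\eta_{j/2^h}(-1)K(\eta_{j/2^h}\chi)$, and $\mbf{e}=(\delta_0,\delta_1,\ldots,\delta_{2^h-1})^T$ is the standard basis vector whose unique nonzero coordinate sits at position $i_0:=(T/2)\bmod 2^h$.

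First I would introduce the conjugate-transpose matrix $M^*:=(\zeta_{2^h}^{ij})$, whose entries lie in $\bb{Z}[\zeta_{2^h}]$, and exploit the familiar DFT identity $M^*M=2^hI$. Left-multiplying the congruence above by $M^*$ gives $2^h\mbf{v}+2^hM^*\mbf{e}\in 2^{h+1}\cl{P}\bb{Z}[\zeta_{2^hk}]$. Cancelling the common factor $2^h$ (valid since $2^h$ is a nonzerodivisor in the domain $\bb{Z}[\zeta_{2^hk}]$), and computing coordinatewise $(M^*\mbf{e})_j=\zeta_{2^h}^{i_0j}$, one obtains
\[
\eta_{j/2^h}(-1)K(\eta_{j/2^h}\chi)\ \equiv\ -\zeta_{2^h}^{i_0j}\ \pmod{2\cl{P}\bb{Z}[\zeta_{2^hk}]}
\]
for every $j\in\{0,1,\ldots,2^h-1\}$.

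To finish, I would evaluate $\eta_{j/2^h}(-1)=\eta_{j/2^h}(\alpha^{T/2})=\zeta_{2^h}^{jT/2}=\zeta_{2^h}^{ji_0}$, where the last equality uses $T/2\equiv i_0\pmod{2^h}$. Since $\zeta_{2^h}^{ji_0}$ is a root of unity, hence a unit of $\bb{Z}[\zeta_{2^hk}]$, it stabilises the ideal $2\cl{P}\bb{Z}[\zeta_{2^hk}]$ and may be cancelled from both sides of the congruence above, yielding exactly $1+K(\eta_{j/2^h}\chi)\equiv 0\pmod{2\cl{P}\bb{Z}[\zeta_{2^hk}]}$, as required.

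The only step that needs genuine care is the cancellation of the scalar $2^h$ from a congruence modulo $2^{h+1}\cl{P}\bb{Z}[\zeta_{2^hk}]$: one has to argue explicitly that $2^h\gamma\in 2^{h+1}\cl{P}\bb{Z}[\zeta_{2^hk}]$ forces $\gamma\in 2\cl{P}\bb{Z}[\zeta_{2^hk}]$, which follows because $\bb{Z}[\zeta_{2^hk}]$ is an integral domain (indeed a Dedekind domain) in which $2^h$ is a nonzerodivisor, so ideal divisibility can be cleared. Once that is granted, everything else is routine DFT algebra and unit-cancellation in a cyclotomic ring.
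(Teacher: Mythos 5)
Your proof is correct and follows essentially the same route as the paper: left-multiply the congruence of Theorem~\ref{thm:3} by the conjugate-transpose DFT matrix, use $C^*C=2^hI$, and cancel the nonzerodivisor $2^h$ to reduce the modulus to $2\cl{P}\bb{Z}[\zeta_{2^hk}]$. The only cosmetic difference is that you handle the final unit cancellation uniformly via $\eta_{\frac{j}{2^h}}(-1)=\zeta_{2^h}^{ji_0}=(C^*\mbf{e})_j$ with $i_0\equiv T/2\pmod{2^h}$, whereas the paper splits into the cases $h=u$ and $h<u$; both yield the same conclusion.
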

\begin{proof}
Let
\[C=\begin{bmatrix}
1 & 1 & \cdots & 1 \\
1 & \zeta_{2^h}^{-1} & \cdots & \zeta_{2^h}^{-(2^h-1)} \\
\vdots & \vdots & \ddots & \vdots \\
1 & \zeta_{2^h}^{-(2^h-1)} & \cdots & \zeta_{2^h}^{-(2^h-1)(2^h-1)}
\end{bmatrix}\]
Observing $C^*C=2^hI$, left multiply $C^*$ to the congruence in theorem \ref{thm:3} and we obtain
\begin{align*}
& 2^h\begin{bmatrix}\eta_{\frac{0}{2^h}}(-1)K(\eta_{\frac{0}{2^h}}\chi) \\ \eta_{\frac{1}{2^h}}(-1)K(\eta_{\frac{1}{2^h}}\chi) \\ \vdots \\ \eta_{\frac{2^h-1}{2^h}}(-1)K(\eta_{\frac{2^h-1}{2^h}}\chi)\end{bmatrix}
\equiv-2^hC^*\begin{bmatrix}\delta_0 \\ \delta_1 \\ \vdots \\ \delta_{2^h-1}\end{bmatrix}
\pmod{2^{h+1}\cl{P}\bb{Z}[\zeta_{2^hk}]} \\
\Ra\ & \begin{bmatrix}\eta_{\frac{0}{2^h}}(-1)K(\eta_{\frac{0}{2^h}}\chi) \\ \eta_{\frac{1}{2^h}}(-1)K(\eta_{\frac{1}{2^h}}\chi) \\ \vdots \\ \eta_{\frac{2^h-1}{2^h}}(-1)K(\eta_{\frac{2^h-1}{2^h}}\chi)\end{bmatrix}
\equiv-C^*\begin{bmatrix}\delta_0 \\ \delta_1 \\ \vdots \\ \delta_{2^h-1}\end{bmatrix}
\pmod{2^\cl{P}\bb{Z}[\zeta_{2^hk}]}
\end{align*}
If $h=u$, then $\eta_{\frac{1}{2^h}}(-1)=-1$ and $\delta_i=\delta_{i,2^{h-1}}$.
If $h<u$, then $\eta_{\frac{1}{2^h}}(-1)=1$ and $\delta_i=\delta_{i,0}$.
In any of these two cases, the congruence can be simplified to
\[\begin{bmatrix}K(\eta_{\frac{0}{2^h}}\chi) \\ K(\eta_{\frac{1}{2^h}}\chi) \\ \vdots \\ K(\eta_{\frac{2^h-1}{2^h}}\chi)\end{bmatrix}
\equiv\begin{bmatrix}-1 \\ -1 \\ \vdots \\ -1\end{bmatrix}
\pmod{2^\cl{P}\bb{Z}[\zeta_{2^hk}]}\]
It completes the proof.
\end{proof}
\end{section}

\begin{section}{Divisibility Results in Semiprimitive Case}\label{sec:semi}
In this section, we apply the theorem we get to semiprimitive case. Throughout this section, we assume that
there exists a positive integer $v$ such that $2^hk\mid(p^v+1)$. It follows that $m=2vw$ for some $w$.
In addition, it must be the case $h<u$ because $2^h\mid p^v+1$ and $\frac{p^m-1}{p^v+1}$ is even.

\begin{lemma}\label{lem:1}
With the notations and assumptions above, $K(\eta_{\frac{i}{2^h}}\chi)=G(\rho)$ for any $i\in\{0,1,\cdots,2^h-1\}$.
\end{lemma}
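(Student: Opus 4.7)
The plan is to express $K(\eta_{\frac{i}{2^h}}\chi)$ as a ratio of Gauss sums via Lemma \ref{lem:Jacobi}, and then to use Lemma \ref{lem:semi} to show that the two ``non-quadratic'' Gauss sums in the quotient coincide, so that the ratio reduces to $G(\rho)$.

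First I would verify that $\rho \cdot \eta_{\frac{i}{2^h}}\chi \neq \varep$, which is clear because its order is divisible by $\ord(\chi) = k \geq 3$, while $\rho$ contributes only a $2$-power factor. Lemma \ref{lem:Jacobi} then supplies
\[K(\eta_{\frac{i}{2^h}}\chi) = J(\rho,\eta_{\frac{i}{2^h}}\chi) = \frac{G(\rho)\,G(\eta_{\frac{i}{2^h}}\chi)}{G(\rho\cdot\eta_{\frac{i}{2^h}}\chi)},\]
so the lemma reduces to $G(\eta_{\frac{i}{2^h}}\chi) = G(\rho\cdot\eta_{\frac{i}{2^h}}\chi)$. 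Next I would check both Gauss sums fall under Lemma \ref{lem:semi}: each character in the quotient has order a multiple of $k \geq 3$ (so strictly bigger than $2$), and each order divides $2^{h+1}k$, which in turn divides $p^v+1$ since $2^hk \mid p^v+1$ and $p^v+1$ is even ($p$ odd). Hence both Gauss sums evaluate to $\pm p^{m/2}$ by the semiprimitive formula.

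The remaining step is a sign comparison. With $v^\ast$ the minimal exponent satisfying $p^{v^\ast} \equiv -1 \pmod N$ and $w^\ast = m/(2v^\ast)$, where $N$ denotes the order of the character in question, Lemma \ref{lem:semi} yields the sign $(-1)^{w^\ast-1+pw^\ast(p^{v^\ast}+1)/N}$. Writing $\rho = \eta_{\frac{2^{h-1}}{2^h}}$, the $\rho$-twist sends $\eta_{\frac{i}{2^h}}$ to $\eta_{\frac{i'}{2^h}}$ with $i' \equiv i+2^{h-1}\pmod{2^h}$; a $2$-adic valuation check on $i,i'$ shows that the order of $\eta_{\frac{i}{2^h}}\chi$ is preserved under this twist except in the pair $i \in \{0, 2^{h-1}\}$, where it swaps between $k$ and $2k$. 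For order-preserving cases the two signs match trivially from the formula. In the order-swap case, $v^\ast$ coincides for $k$ and $2k$ since $p^{v^\ast}+1$ is automatically even, hence $w^\ast$ is unchanged; the parity difference between $(p^{v^\ast}+1)/k$ and $(p^{v^\ast}+1)/(2k)$ is absorbed using $2^hk \mid p^v+1$, the fact that the same power of $2$ divides $p^v+1$ and $p^{v^\ast}+1$ (because $v/v^\ast$ is odd, and the quotient $(p^v+1)/(p^{v^\ast}+1)$ is then an alternating sum of an odd number of odd terms), and the standing inequality $h<u$.

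I expect this sign-bookkeeping to be the main technical obstacle: one must descend from the ``global'' $v$ (which need not be minimal for the individual character's order) to the character-specific minimal $v^\ast$, and carefully track the parity of $w^\ast(p^{v^\ast}+1)/N$ across the $\rho$-twist, especially in the order-swap pair $\{0, 2^{h-1}\}$ where $N$ doubles. Once the signs are shown to agree, the two Gauss sums are equal, the Jacobi quotient collapses to $G(\rho)$, and the lemma follows.
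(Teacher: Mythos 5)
Your reduction of the claim to $G(\eta_{\frac{i}{2^h}}\chi)=G(\rho\,\eta_{\frac{i}{2^h}}\chi)$ via Lemma \ref{lem:Jacobi} is exactly the paper's route, and your handling of the order-preserving indices $i\notin\{0,2^{h-1}\}$ is fine, since the semiprimitive evaluation depends only on the order $N$. The gap is in the swap pair $i\in\{0,2^{h-1}\}$, where you assert that the parity difference between $(p^{v^\ast}+1)/k$ and $(p^{v^\ast}+1)/(2k)$ is ``absorbed''. It is not. With $v^\ast$ minimal for $k$ (equivalently for $2k$, as you correctly note) and $w^\ast=m/(2v^\ast)$, Lemma \ref{lem:semi} gives signs $(-1)^{w^\ast-1+pw^\ast(p^{v^\ast}+1)/k}$ and $(-1)^{w^\ast-1+pw^\ast(p^{v^\ast}+1)/(2k)}$; since $(p^{v^\ast}+1)/k$ is even, their ratio is $(-1)^{w^\ast(p^{v^\ast}+1)/(2k)}$, and nothing in the standing hypotheses forces this to equal $+1$ when $h\le 1$. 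The auxiliary facts you invoke ($2^hk\mid p^v+1$ together with the equal $2$-adic valuations of $p^v+1$ and $p^{v^\ast}+1$) yield $2^h\mid p^{v^\ast}+1$ and hence settle the parity only when $h\ge 2$, since only then is $(p^{v^\ast}+1)/(2k)$ forced to be even; the inequality $h<u$ contributes nothing here.

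Concretely, take $p=3$, $m=4$ (so $q=81$, $u=4$), $k=5$, $h=0$: then $v^\ast=2$, $w^\ast=1$, and the semiprimitive formula gives $G(\chi)=(-1)^{0+2}\cdot 9=9$ for $\ord(\chi)=5$ but $G(\rho\chi)=(-1)^{0+1}\cdot 9=-9$ for $\ord(\rho\chi)=10$, whence $K(\chi)=G(\rho)G(\chi)/G(\rho\chi)=-G(\rho)$ rather than $G(\rho)$. So the identity you are trying to establish can fail under the stated hypotheses; for what it is worth, the paper's own proof has the same defect, as it deduces $G(\chi)=G(\rho\chi)$ solely from $v_1=v_2$ without accounting for the $N$-dependence of the exponent. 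The correct conclusion in the cases $i\in\{0,2^{h-1}\}$ carries the extra factor $(-1)^{w^\ast(p^{v^\ast}+1)/(2k)}$, so one must either add a hypothesis making this $+1$ (e.g.\ $w^\ast$ even, or $4k\mid p^{v^\ast}+1$) or propagate the sign into Lemma \ref{lem:2}, where it matters because the vector of Jacobi sums is then no longer constant.
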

\begin{proof}
Using lemma \ref{lem:Jacobi}, we have
\[K(\eta_{\frac{i}{2^h}}\chi)=J(\rho,\eta_{\frac{i}{2^h}}\chi)=\frac{G(\rho)G(\eta_{\frac{i}{2^h}}\chi)}{G(\rho\eta_{\frac{i}{2^h}}\chi)}\]
We only need to prove $G(\eta_{\frac{i}{2^h}}\chi)=G(\rho\eta_{\frac{i}{2^h}}\chi)$.

If $i=0$ or $i=2^{h-1}$, then we need to prove $G(\chi)=G(\rho\chi)$. Note that both two are semiprimitive Gauss
sum. The smallest positive integers $v_1,v_2$ such that $k\mid(p^{v_1}+1),2k\mid(p^{v_2}+1)$ are equal.
Hence, $G(\chi)=G(\rho\chi)$ follows from lemma \ref{lem:semi}.

Otherwise, the order of $\eta_{\frac{i}{2^h}}\chi$ is the equal to that of $\rho\eta_{\frac{i}{2^h}}\chi$.
Thus it trivially follows from lemma \ref{lem:semi}.
\end{proof}

\begin{lemma}\label{lem:2}
With the notations and assumptions above, if $\beta$ is a root of $S(X)$, then the multiplicities of $\beta$
is $\geq2^h$.
\end{lemma}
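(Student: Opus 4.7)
The plan is to apply the ``if'' direction of Theorem \ref{thm:3}: it suffices to verify the matrix congruence stated there with the Jacobi sums replaced by the explicit values coming from Lemma \ref{lem:1}. Thanks to that lemma every entry $K(\eta_{j/2^h}\chi)$ of the column vector collapses to the single Gauss sum $G(\rho)$, so the left-hand side becomes $G(\rho)$ times $C\cdot\mbf{1}$, where $\mbf{1}$ is the all-ones vector; the whole matrix system becomes scalar.

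First I would record the two remaining $j$- or $i$-dependent pieces of data in the congruence, namely the character values $\eta_{j/2^h}(-1)$ and the Kronecker indicators $\delta_i$. Writing $T = 2^uT'$ with $T'$ odd and using the semiprimitive hypothesis $h<u$, the quantity $(j/2^h)(T/2) = j\cdot 2^{u-h-1}T'$ is an integer for every $j$, so $\eta_{j/2^h}(-1) = 1$ for all $j$; for the same reason $T/2 = 2^{u-1}T'$ is divisible by $2^h$, hence $\delta_0 = 1$ and $\delta_i = 0$ for $i\neq 0$. A short evaluation of the geometric sum $\sum_{j=0}^{2^h-1}\zeta_{2^h}^{-ij}$ then gives $C\cdot\mbf{1} = (2^h,0,\ldots,0)^{\top}$.

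These three simplifications collapse the matrix congruence of Theorem \ref{thm:3} to the single scalar relation $2^hG(\rho) \equiv -2^h \pmod{2^{h+1}\cl{P}\bb{Z}[\zeta_{2^hk}]}$, equivalently $1+G(\rho) \equiv 0 \pmod{2\cl{P}\bb{Z}[\zeta_{2^hk}]}$. But this is exactly the hypothesis: since $\beta$ is a root of $S(X)$, Proposition \ref{prop:1} yields $1+K(\chi) \equiv 0 \pmod{2\cl{P}}$, and $K(\chi) = G(\rho)$ by the $i=0$ case of Lemma \ref{lem:1}; the congruence then extends from $\bb{Z}[\zeta_k]$ to $\bb{Z}[\zeta_{2^hk}]$ along the natural inclusion, completing the verification.

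The only real obstacle is the bookkeeping among the rings $\bb{Z}[\zeta_k]\subset\bb{Z}[\zeta_{2^hk}]$ and the prime ideals above $2$; one must be careful that dividing the target congruence through by the common factor $2^h$ is legitimate (it is, since $\bb{Z}[\zeta_{2^hk}]$ is a domain). Once the three identifications $\eta_{j/2^h}(-1)=1$, $\delta_i = \delta_{i,0}$, and $C\cdot\mbf{1}=(2^h,0,\ldots,0)^{\top}$ are in hand, the full matrix congruence of Theorem \ref{thm:3} reduces to the single scalar condition supplied for free by Proposition \ref{prop:1}, and we conclude that $\beta$ has multiplicity at least $2^h$ in $S(X)$.
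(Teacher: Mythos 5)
Your proposal is correct and follows essentially the same route as the paper's own proof: invoke Theorem \ref{thm:3}, use Lemma \ref{lem:1} to collapse the Jacobi-sum vector to the constant $G(\rho)$, evaluate the geometric sums to reduce the matrix congruence to the single scalar condition $1+K(\chi)\equiv0\pmod{2\cl{P}\bb{Z}[\zeta_{2^hk}]}$, and close via Proposition \ref{prop:1}. Your justification of the passage between $\bb{Z}[\zeta_k]$ and $\bb{Z}[\zeta_{2^hk}]$ (via the inclusion of ideals, in the one direction actually needed) is if anything slightly more careful than the paper's, which instead appeals to $K(\chi)=G(\rho)$ being a rational integer.
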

\begin{proof}
Notice that $h<u$, which implies $\eta_{\frac{1}{2^h}}(-1)=1$ and $\delta_i=\delta_{i,0}$.
By theorem \ref{thm:3}, $\beta$ has multiplicities $\geq2^h$ if and only if
\[\begin{bmatrix}
1 & 1 & \cdots & 1 \\
1 & \zeta_{2^h}^{-1} & \cdots & \zeta_{2^h}^{-(2^h-1)} \\
\vdots & \vdots & \ddots & \vdots \\
1 & \zeta_{2^h}^{-(2^h-1)} & \cdots & \zeta_{2^h}^{-(2^h-1)(2^h-1)}
\end{bmatrix}
\begin{bmatrix}K(\eta_{\frac{0}{2^h}}\chi) \\ K(\eta_{\frac{1}{2^h}}\chi) \\ \vdots \\ K(\eta_{\frac{2^h-1}{2^h}}\chi)\end{bmatrix}
\equiv-2^h\begin{bmatrix}1 \\ 0 \\ \vdots \\ 0\end{bmatrix}
\pmod{2^{h+1}\cl{P}\bb{Z}[\zeta_{2^hk}]}\]
From lemma \ref{lem:1}, we know $K(\eta_{\frac{i}{2^h}}\chi)=K(\chi)=G(\rho)$ for arbitrary $i$. Thus,
\[\op{l.h.s}=
\begin{bmatrix}
1 & 1 & \cdots & 1 \\
1 & \zeta_{2^h}^{-1} & \cdots & \zeta_{2^h}^{-(2^h-1)} \\
\vdots & \vdots & \ddots & \vdots \\
1 & \zeta_{2^h}^{-(2^h-1)} & \cdots & \zeta_{2^h}^{-(2^h-1)(2^h-1)}
\end{bmatrix}
\begin{bmatrix}K(\chi) \\ K(\chi) \\ \vdots \\ K(\chi)\end{bmatrix} \\
=\begin{bmatrix}2^hK(\chi) \\ 0 \\ \vdots \\ 0\end{bmatrix}\]
And the condition becomes
\begin{align*}
& 2^hK(\chi)\equiv-2^h\pmod{2^{h+1}\cl{P}\bb{Z}[\zeta_{2^hk}]} \\
\Lra\ & K(\chi)\equiv-1\pmod{2\cl{P}\bb{Z}[\zeta_{2^hk}]} \\
\Lra\ & K(\chi)\equiv-1\pmod{2\cl{P}}
\end{align*}
The last step is because $K(\chi)=G(\rho)=\pm p^{vw}$ is an integer under our assumptions.
Finally, note that $K(\chi)\equiv-1\pmod{2\cl{P}}$ if and only if $S(\beta)=0$ by proposition \ref{prop:1}.
\end{proof}

\begin{theorem}
Keep the notations and assumptions as above. Suppose $v'$ is the smallest positive integer such that
$k\mid(p^{v'}+1)$ and $m=2v'w'$. We have
\begin{compactitem}
  \item If $p\equiv1\pmod4$, then $(1+X+\cdots+X^{k-1})^{2^h}\mid S(X)\ \Lra\ w'$ is even.
  \item If $p\equiv3\pmod4$, then $(1+X+\cdots+X^{k-1})^{2^h}\mid S(X)\ \Lra\ w'$ is even or $v'w'$ is odd.
\end{compactitem}
\end{theorem}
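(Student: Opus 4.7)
The plan is to reduce the polynomial divisibility to a single Jacobi sum congruence and then evaluate that congruence using the explicit Gauss sum formulas from the preliminaries.

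First I would invoke Lemma \ref{lem:2}: under the standing semiprimitive assumption, once $\beta$ is a root of $S(X)$ its multiplicity is automatically at least $2^h$. Hence $(1+X+\cdots+X^{k-1})^{2^h}\mid S(X)$ is equivalent to $S(\beta)=0$ for every primitive $k$-th root of unity $\beta$ (all such $\beta$ are Galois-conjugate over $\bb{F}_2$, so it suffices to check one). Proposition \ref{prop:1} then translates this to the single congruence $K(\chi)\equiv -1\pmod{2\cl{P}}$.

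Next I would evaluate $K(\chi)$ via Lemma \ref{lem:Jacobi}: $K(\chi)=G(\rho)G(\chi)/G(\rho\chi)$. Since $p^{v'}+1$ is even and $k$ is odd, $2k$ divides $p^{v'}+1$, so $\chi$ (order $k$) and $\rho\chi$ (order $2k$) are both semiprimitive with the same $v'$ and $w'$. Applying Lemma \ref{lem:semi}, the exponents of the two Gauss sums differ by $pw'(p^{v'}+1)/(2k)=pw'M$ with $M:=(p^{v'}+1)/(2k)\in\bb{Z}$, so $G(\chi)/G(\rho\chi)=(-1)^{w'M}$ (using that $p$ is odd). Combined with the quadratic formula from Lemma \ref{lem:quadr} (and $\sqrt{q}=p^{v'w'}$ since $m=2v'w'$ is even), this yields $K(\chi)=(-1)^{w'M}G(\rho)\in\bb{Z}$.

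I would then observe that $G(\rho)\equiv -1\pmod 4$ holds in both residue cases: for $p\equiv 1\pmod 4$ it reduces to $-p^{v'w'}\equiv -1$, and for $p\equiv 3\pmod 4$ the factor $i^m=(-1)^{v'w'}$ combines with $p^{v'w'}\equiv(-1)^{v'w'}\pmod 4$ to again give $-1\pmod 4$. Because $K(\chi)$ is a rational integer and $2\cl{P}\cap\bb{Z}=4\bb{Z}$, the congruence $K(\chi)\equiv -1\pmod{2\cl{P}}$ reduces to $(-1)^{w'M}=1$, i.e., $w'M$ is even.

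The final step is a short 2-adic calculation of $\ord_2(M)$. Since $k$ is odd, $\ord_2(M)=\ord_2(p^{v'}+1)-1$. When $p\equiv 1\pmod 4$, $p^{v'}\equiv 1\pmod 4$ gives $p^{v'}+1\equiv 2\pmod 4$, so $M$ is odd and $w'M$ even $\Leftrightarrow w'$ even. When $p\equiv 3\pmod 4$ and $v'$ is even, the same estimate shows $M$ is odd; when $v'$ is odd, Lifting the Exponent gives $\ord_2(p^{v'}+1)=\ord_2(p+1)\geq 2$, so $M$ is even and $w'M$ is automatically even. Assembling, for $p\equiv 1\pmod 4$ the condition is $w'$ even, and for $p\equiv 3\pmod 4$ it is ``$v'$ odd or $w'$ even'', which is logically the same as ``$w'$ even or $v'w'$ odd''. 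I expect the main obstacle to be the careful sign bookkeeping: three distinct sign contributions ($G(\rho)$, the ratio $(-1)^{w'M}$, and $p^{v'w'}\pmod 4$) must combine consistently in each subcase, and one must avoid conflating $v'$ (for $k$) with the $v$ from the broader assumption $2^hk\mid p^v+1$.
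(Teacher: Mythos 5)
Your high-level route is the same as the paper's: Lemma \ref{lem:2} upgrades every root of $S(X)$ among the $k$-th roots of unity to a root of multiplicity at least $2^h$, so everything reduces to the criterion for $S(\beta)=0$ in the semiprimitive case. The paper disposes of that criterion by citing Theorem 4.1 of \cite{Alaca16}; you instead re-derive it from Proposition \ref{prop:1} via the explicit quadratic and semiprimitive Gauss sum evaluations. That computational core is correct: $K(\chi)=(-1)^{w'M}G(\rho)$ with $M=(p^{v'}+1)/(2k)$, the verification that $G(\rho)\equiv-1\pmod 4$ in both residue classes of $p$, the reduction of $K(\chi)\equiv-1\pmod{2\cl{P}}$ to a congruence modulo $4$ for the rational integer $K(\chi)$ (using $2\cl{P}\cap\bb{Z}=4\bb{Z}$), and the final $2$-adic analysis of $w'M$ all check out and reproduce the stated parity conditions.

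There are, however, two genuine gaps in your reduction step. First, the roots of $1+X+\cdots+X^{k-1}$ are \emph{all} $k$-th roots of unity other than $1$, so when $k$ is composite you must also show $S(\beta')=0$ for elements $\beta'$ of every order $k'$ with $k'\mid k$ and $1<k'<k$; your argument treats only the primitive $k$-th roots. This is repairable: if $v''$ is minimal with $k'\mid(p^{v''}+1)$, then $v'$ is an odd multiple of $v''$, hence $w''$ is an odd multiple of $w'$ and $v''w''=v'w'$, so the parity conditions obtained for each divisor $k'$ coincide with those for $k$ --- but this has to be argued. Second, your justification that one primitive $k$-th root suffices (``all such $\beta$ are Galois-conjugate over $\GF_2$'') is false in general: the primitive $k$-th roots of unity split into $\phi(k)/f$ Frobenius orbits, which exceeds $1$ unless $2$ generates $(\bb{Z}/k\bb{Z})^*$ (e.g.\ $k=7$, $f=3$, two orbits). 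The correct reason it suffices to check one $\beta$ is that in the semiprimitive case $K(\chi)$ equals the rational integer $(-1)^{w'M}G(\rho)$ independently of which order-$k$ character $\chi$ and which prime $\cl{P}$ over $2$ arise from the choice of $\beta$, so the resulting condition modulo $4$ is the same for every choice. With these two repairs your argument becomes a valid, self-contained substitute for the citation of \cite{Alaca16}.
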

\begin{proof}
It simply follows lemma \ref{lem:2} and theorem 4.1 of \cite{Alaca16}.
\end{proof}
\end{section}

\begin{section}{Conclusion}\label{sec:con}
In this paper, we generalize the main result of \cite{Alaca16} by considering the multiplicities of character
values of SLCE sequences. And we apply our results to semiprimitive case and prove that a root must be multiple
root in semiprimitive case.

There are still many open problems to be tackled in this research area.
For example, can we use evaluations of Gauss sums in other cases to get more divisibility results?
Is the necessary condition strong enough to derive a contradiction which implies the multiplicities is small?
Can we derive a better bound for linear complexity of SLCE sequences than \cite{MeidlF206}?
How can we generalize this work to study linear complexity of $d$-ary SLCE sequences?
Is there any method to study $1$-error or even $k$-error linear complexity of SLCE sequences?
\end{section}

\begin{section}{Acknowledgement}
\end{section}

\bibliographystyle{plain}
\bibliography{mybib}

\begin{thebibliography}{10}

\bibitem{Alaca16}
{\c S}aban Alaca and Goldwyn Millar.
\newblock Character values of the sidelnikov-lempel-cohn-eastman sequences.
\newblock {\em Cryptography and Communications}, pages 1--18, 2016.

\bibitem{Meidl07}
H.~Aly and W.~Meidl.
\newblock On the linear complexity and k-error linear complexity over
  $\mathbb{F}_p$ of the $d$-ary sidel'nikov sequence.
\newblock {\em IEEE Transactions on Information Theory}, 53(12):4755--4761,
  2007.

\bibitem{Winter06}
Hassan Aly and Arne Winterhof.
\newblock On the k-error linear complexity over $\mathbb{F}_p$ of legendre and
  sidelnikov sequences.
\newblock {\em Designs, Codes and Cryptography}, 40(3):369--374, 2006.

\bibitem{GJSum}
Bruce~C. Berndt, Ronald~J. Evans, and Kenneth~S. Williams.
\newblock {\em Gauss and {J}acobi sums}.
\newblock Canadian Mathematical Society Series of Monographs and Advanced
  Texts. John Wiley \& Sons, Inc., New York, 1998.
\newblock A Wiley-Interscience Publication.

\bibitem{MeidlFd06}
Nina Brandstätter and Wilfried Meidl.
\newblock {\em On the Linear Complexity of Sidel'nikov Sequences over Fd},
  pages 47--60.
\newblock Springer Berlin Heidelberg, Berlin, Heidelberg, 2006.

\bibitem{Meidl08}
Nina Brandstätter and Wilfried Meidl.
\newblock On the linear complexity of sidel'nikov sequences over nonprime
  fields.
\newblock {\em Journal of Complexity}, 24(5–6):648--659, 2008.

\bibitem{Chung06}
Jin-Ho Chung and Kyeongcheol Yang.
\newblock {\em Bounds on the Linear Complexity and the 1-Error Linear
  Complexity over $\mathbb{F}_p$ of $M$-ary Sidel'nikov Sequences}, pages
  74--87.
\newblock Springer Berlin Heidelberg, Berlin, Heidelberg, 2006.

\bibitem{Eun05}
Yu-Chang Eun, Hong-Yeop Song, and Gohar~M. Kyureghyan.
\newblock {\em One-Error Linear Complexity over Fp of Sidelnikov Sequences},
  pages 154--165.
\newblock Springer Berlin Heidelberg, Berlin, Heidelberg, 2005.

\bibitem{Garaev06}
M.~Z. Garaev, F.~Luca, I.~E. Shparlinski, and A.~Winterhof.
\newblock On the lower bound of the linear complexity over fp of sidelnikov
  sequences.
\newblock {\em IEEE Transactions on Information Theory}, 52(7):3299--3304,
  2006.

\bibitem{Golomb04}
Solomon~W. Golomb and Guang Gong.
\newblock {\em Signal Design for Good Correlation: For Wireless Communication,
  Cryptography, and Radar}.
\newblock Cambridge University Press, New York, NY, USA, 2004.

\bibitem{Helle04}
T.~Helleseth, M.~Maas, J.~E. Mathiassen, and T.~Segers.
\newblock Linear complexity over fp of sidel'nikov sequences.
\newblock {\em IEEE Transactions on Information Theory}, 50(10):2468--2472,
  2004.

\bibitem{HelleFp03}
T.~Helleseth, Kim Sang-Hyo, and No~Jong-Seon.
\newblock Linear complexity over fp and trace representation of
  lempel-cohn-eastman sequences.
\newblock {\em IEEE Transactions on Information Theory}, 49(6):1548--1552,
  2003.

\bibitem{Helle02}
Tor Helleseth and Kyeongcheol Yang.
\newblock {\em On Binary Sequences of Period n=pm-1 with Optimal
  Autocorrelation}, pages 209--217.
\newblock Springer London, London, 2002.

\bibitem{NumTheory}
Kenneth Ireland and Michael Rosen.
\newblock {\em A classical introduction to modern number theory.}
\newblock SPRINGER, 1982.

\bibitem{Kim06}
Young-Sik Kim, Jung-Soo Chung, Jong-Seon No, and Habong Chung.
\newblock {\em Linear Complexity over $\mathbb{F}_p$ of Ternary Sidel'nikov
  Sequences}, pages 61--73.
\newblock Springer Berlin Heidelberg, Berlin, Heidelberg, 2006.

\bibitem{Pott03}
Gohar~M. Kyureghyan and Alexander Pott.
\newblock On the linear complexity of the sidelnikov-lempel-cohn-eastman
  sequences.
\newblock {\em Designs, Codes and Cryptography}, 29(1):149--164, 2003.

\bibitem{LCE77}
A.~Lempel, M.~Cohn, and W.~Eastman.
\newblock A class of balanced binary sequences with optimal autocorrelation
  properties.
\newblock {\em IEEE Transactions on Information Theory}, 23(1):38--42, 1977.

\bibitem{FiniteF}
Rudolf Lidl and Harald Niederreiter.
\newblock {\em Finite fields}.
\newblock Encyclopaedia of mathematics and its applications. Cambridge
  University Press, New York, 1997.
\newblock Cette réimpression est la version 2008.

\bibitem{Lucas}
E.~Lucas.
\newblock Sur les congruences des nombres eul\'eriens et les coefficients
  diff\'erentiels des functions trigonom\'etriques suivant un module premier.
\newblock {\em Bull. Soc. Math. France}, 6:49--54, 1878.

\bibitem{MeidlF206}
Wilfried Meidl and Arne Winterhof.
\newblock Some notes on the linear complexity of
  sidel'nikov-lempel-cohn-eastman sequences.
\newblock {\em Designs, Codes and Cryptography}, 38(2):159--178, 2006.

\bibitem{Sidel69}
V.~M. Sidelnikov.
\newblock Some k-valued pseudo-random sequences and nearly equidistant codes.
\newblock {\em Problems of Information Transmission}, 5(1):12--16, 1969.

\bibitem{Kim05}
Kim Young-Sik, Chung Jung-Soo, No~Jong-Seon, and Chung Habong.
\newblock On the linear complexity over $\mathbb{F}_p$ of $m$-ary sidel'nikov
  sequences.
\newblock In {\em Proceedings. International Symposium on Information Theory,
  2005. ISIT 2005.}, pages 2007--2011, 2005.

\end{thebibliography}

\end{document}